\definecolor{Gray}{gray}{0.9}
\newtheorem{theorem}{Theorem}
\newtheorem{lemma}{Lemma}
\newtheorem{corollary}{Corollary}
\newcommand{\distas}[1]{\mathbin{\overset{#1}{\kern\z@\sim}}}%
\newsavebox{\mybox}\newsavebox{\mysim}
\newcommand{\distras}[1]{%
	\savebox{\mybox}{\hbox{\kern3pt$\scriptstyle#1$\kern3pt}}%
	\savebox{\mysim}{\hbox{$\sim$}}%
	\mathbin{\overset{#1}{\kern\z@\resizebox{\wd\mybox}{\ht\mysim}{$\sim$}}}%
}
\def\RR{\mathbb{R}}
\def\P{{ Pr}}  
\def\E{{\rm E}}  
\def\Var{{\rm Var}}  
\begin{document}

\title{Analysis of the Leakage Queue: \\ A Queueing Model for Energy Storage Systems with Self-discharge} 
\author{Majid Raeis*, 
         Almut Burchard**, J\"{o}rg Liebeherr* \\[20pt] 
        * Department of ECE,   University of Toronto,  Canada. \\[-10pt]
        ** Department of Mathematics,   University of Toronto, Canada. \\[-5pt]
	 	E-mail:   m.raeis@mail.utoronto.ca , almut@math.toronto.edu, jorg@ece.utoronto.ca.
        }%

\setcounter{page}{1}
\maketitle
\thispagestyle{plain}
\pagestyle{plain}

\begin{abstract}
Energy storage is a crucial component of the smart grid, since it provides 
the ability to buffer transient fluctuations of the energy supply from
renewable sources.
Even without a load, energy storage systems experience a reduction of the
stored
energy through {\it self-discharge}.  In some storage technologies, the
rate of
self-discharge can exceed 50\% of the stored energy per day.
In this paper, we  investigate the self-discharge phenomenon in energy
storage
using a queueing system model, which we refer to as {\it leakage queue}.
When the average net charge is positive, 
we discover that the leakage queue operates in one of two regimes: a
leakage-dominated regime and a capacity-dominated regime. 
We find that in the leakage-dominated regime,  the stored energy stabilizes at a point 
that is below the storage capacity. 
Under suitable independence assumptions for energy supply and demand, the stored energy in this regime closely follows a normal distribution. 
We present two methods for computing probabilities of underflow and
overflow at a leakage queue. 
The methods are validated
in a numerical example where the energy supply resembles a wind energy source. \end{abstract}


\section{Introduction}
\label{sec:intro}

With their ability 
to absorb the intermittency and uncertainty in renewable energy generation, 
energy storage systems  facilitate the integration of  
renewable energy sources into the grid. 
There exists a wide variety of energy storage technologies, each offering a trade-off with 
regards to storage capacity per unit of volume (energy density), 
delivered power per unit of volume (power density), scalability, 
and others  \cite{dekka2015survey}. 
For instance, compressed air  energy storage systems have a  high capacity but a low power density,  which makes them suitable for long-term storage applications that do not need a fast response time.
On the other end of the spectrum,  supercapacitors have a high power density, but limited storage capacity, which serves applications that require fast responses. Hybrid energy storage systems 
 seek to combine the advantages of different storage technologies in order to meet the need of 
 a specific application \cite{Survey5}. 
Due to the generally high cost of energy storage technologies 
-- 
the price of energy storage can be thousands of dollars per kilowatt hour 
-- 
the economic viability of an energy storage system crucially depends on  properly dimensioning 
the storage size.\footnote{In addition to the price, the cost of energy storage 
also takes into account other factors such as the 
 number of recharge cycles over the lifetime of a unit, the amount of energy that can be 
withdrawn in a single 
recharge cycle ({depth of discharge}), and ancillary costs.}
 Over-provisioning of an energy storage system unnecessarily increases costs, 
while under-provisioning may render it ineffective.  
The need for tools to dimension energy storage systems has 
motivated the development of analytical methods. 
By modelling energy storage systems as  finite-capacity queueing systems with stochastic arrivals (energy supply) and departures (energy demand), the vast 
queueing theory literature becomes available to the problem of storage sizing. 
However, a closer inspection reveals that  energy storage systems are not automatically 
a good fit for a queueing theory analysis.  
For one, arrivals and departures in queueing analysis are often 
expressed as point processes that track arrival and departure events. 
Energy supply and demand, on the other hand, are better characterized by fluid-flow processes.  
Also, many queueing theory methods have been developed for job shop manufacturing and 
communication networks, where buffer provisioning is primarily concerned with preventing 
overflows. Here, it is generally required that the average service rate 
exceeds the average arrival rate. 
Differently, in energy storage the overriding concern is the prevention of buffer underflows 
(`empty batteries'), and, consequently, 
the average arrival rate generally exceeds the average service rate. 

Different from systems usually analyzed by queueing theoretic methods, the stored 
energy may shrink over time even if the system is inactive. The rate of 
leakage of the stored energy, referred to as {\it self-discharge}, results 
from chemical reactions, loss of thermal or kinetic energy, and other factors. 
It can be as low as 1\% of the total charge per month 
for lithium-ion batteries \cite{bat_overview}, and may exceed 50\% per day for 
flywheels \cite{survey3}. 
Interestingly, there is no queueing analysis in the literature 
that accounts for the impact of self-discharge. 
The lack of analytical models for systems with self-discharge was made evident 
in recent performance studies of energy storage \cite{flywheel,regulation,ytrp,yjod,esd,joint-yang}, which 
resorted to optimization methods when accounting for self-discharge. 
The main complication in a performance analysis of storage with self-discharge is 
that self-discharge adds a deterministic or stochastic process 
that runs concurrently with the conventional arrival and service processes. 
Since the quantity of self-discharge depends on the amount of stored energy, the self-discharge rate is not an independent process, but is coupled with the supply and demand 
processes. 

\smallskip
In this paper we study the dynamics of queueing systems that model energy storage systems  
with self-discharge. 
For example, %
how does the self-discharge process interact with the processes for energy supply and 
energy demand? 
Is it possible to identify parameter regions where a storage system 
has a desirable behavior, that is, where overflows and underflows are rare events? 
We consider a  queueing model,  referred to as {\it queue with leakage} or {\it leakage queue}, where supply and demand are governed by 
stochastic processes and, additionally, in each time slot, the content of the queue is reduced 
by a factor $\gamma$, with $0 < \gamma < 1$. 
We consider  leakage queues with finite and infinite capacity. 
Under the assumption of an on average positive net charge, i.e.,  the average supply exceeds 
the average demand, we make a 
number of discoveries, some of which are quite surprising: 
\begin{itemize}
\item We find that a leakage queue with finite capacity has parameter regimes with distinct behaviors, which require different approaches for an analysis. 
\item In the parameter regime where the queue is rarely full or empty, a leakage queue 
is well approximated  by an idealized 
infinite-capacity system that permits stored energy to become negative. 
We find that the distribution of the stored energy in this regime is close to Gaussian. 

\item A leakage queue with $\gamma >0$ is stable for arbitrary supply 
distributions with a finite average. Stability is maintained for systems with 
arbitrarily large capacity.  
Moreover, convergence to the stable steady state occurs exponentially fast. 

\item The steady-state filling level can be precisely determined in a simple expression that 
only requires the averages of the supply and demand processes, as well as $\gamma$.

 
\medskip
\end{itemize}
The goal of our study is to gain insight into the self-discharge phenomenon 
occurring in energy storage systems. 
To isolate the self-discharge effects to a maximum degree, we consider a minimalist  
system model that only leaves the input, output, and leakage processes in place. 
Consequently, our model is not a high-fidelity model for energy storage or a 
particular  storage technology. Also, we do not account for the 
potentially complex interactions between users and utilities 
in demand side management. 
Lastly, we do not address the scale at which energy storage is deployed, that is, whether it 
is utility scale, community scale, or household scale. (We note that  our numerical examples use parameter ranges 
that apply to residential households.)
By eliminating these factors and by considering a bare-bones model of energy storage, 
we are able to  observe and quantify previously unreported dynamics 
in energy storage systems. 


The remainder of the paper is structured as follows. 
In Section~\ref{sec:related}, we discuss the literature on the analysis of energy storage systems. 
In Section~\ref{sec:dynamics}, 
we explore the dynamics of queueing systems with leakage. In Section~\ref{sec:stability},  
we establish the stability of the leakage queue under a broad set of assumptions.
In Section~\ref{sec:bounds},  we present two analytical approaches, each applicable to 
a specified regime of parameters, for deriving overflow and underflow probabilities. In 
Section~\ref{sec:numerics},  we evaluate our analysis with random processes 
that mimic the behavior of a renewable energy source. 
We present conclusions in Section~\ref{sec:concl}.

\section{Related Work}
\label{sec:related}
Energy storage plays a major role in many aspects of the smart grid, 
and, consequently, there is a extensive literature on their analysis. 
The electrical grid requires that power generation and demand load are continuously balanced. 
This becomes more involved with time-variable renewable energy sources and 
storage systems absorbing the variations from such sources. 
Smart grid approaches that take the perspective of a utility operator 
are concerned with placement, sizing, and control of energy storage systems 
with the goal to optimally balance power~\cite{elgamal, sun1,sun2}, 
reduce power generation costs \cite{hassibi}, or operational costs \cite{tassiulas-grid-11}.  
Works in this area are frequently formulated as optimal control or optimization 
problems, with the objective to devise distributed algorithms that achieve 
a desired operating point. 

Demand side management  \cite{dsmsurvey} takes the perspective of an energy user, 
and broadly refers to measures that encourage users to become more 
energy efficient. As one form of demand side management, demand response refers to methods for short-term 
reductions in energy consumption. By creating incentives to users,  
demand response  seeks to match elastic demands with fluctuating  renewable energy sources. 
In \cite{Low-2011-DemandResponse,samadi2010optimal}, demand response is formulated 
as a utility maximization problem where dynamic pricing incentivizes individual 
users to benefit the overall system. 
Studies on demand response apply a wide range of methods, from coordination between 
appliances \cite{coordinating-2010}, bounds on prediction errors \cite{wierman-dr}, 
and game-theoretic approaches~\cite{game-dr-2010}. 

Performance analysis of energy storage systems 
intends to support dimensioning of storage by providing metrics such as 
overflow and underflow probabilities, and the amount of stored energy in the steady state. 
Since detailed models of the circuit or electrochemical processes in an energy storage system, as given in
\cite{chen2006accurate,li_ion,leadacic1992}, 
are not analytically tractable, energy storage systems are generally 
described by  abstract models.
Differential or difference equations have been used for detailed descriptions of 
the energy evolution in lithium-ion batteries \cite{li_ion} and flywheels 
\cite{flywheel, regulation}.  
The suitability of queueing theory for 
analyzing the dynamics of energy storage has been pointed out in \cite{dualkeshav}. 
Interestingly, queueing theory was applied in the 1960s 
for analyzing storage properties of water reservoirs \cite[Chp. III.5]{cohen}, 
and the fluid-flow analysis of queueing systems was known as `dam theory' \cite{damtheory}. 

More recently, a fluid-flow interpretation of queueing theory, known  as 
`network calculus' \cite{Book-LeBoudec}, has been applied to 
energy storage systems. 
A deterministic analysis 
has been used in \cite{leboudec}  to devise battery charging schedules that prevent batteries from running empty. 
Stochastic extensions of the network calculus have been applied to analyze energy storage 
in the presence of random, generally Markovian,  
energy sources  \cite{genset,pnc,jiang}. 
In these works, the evolution of the stored energy is expressed using 
a time-dependent function for the backlog in a finite-capacity queueing system from~\cite{cruz}. 
Recent studies \cite{regulation,ytrp,yjod,esd,joint-yang} have improved the fidelity of energy storage models 
by considering factors such as limited charging and discharging rates, charging and discharging 
inefficiencies, as well as self-discharge. 
In \cite{ytrp}, the self-discharge is modeled by a constant rate function, whereas 
the other works \cite{regulation,yjod,esd,joint-yang} use a proportional leakage ratio as described in Sec.~\ref{sec:intro}. 
Since queueing systems for energy storage systems with proportional self-discharge 
could not be solved analytically, the existing analyses resort to simulation and optimization methods.  
These provide numerical solutions, but do not easily give insight into 
parameter regimes and basic tradeoffs.

\section{A Queueing Model for Energy Storage with Self-Discharge} \label{sec:dynamics}

\begin{figure}[!t]
\centering
\includegraphics[width=4in]{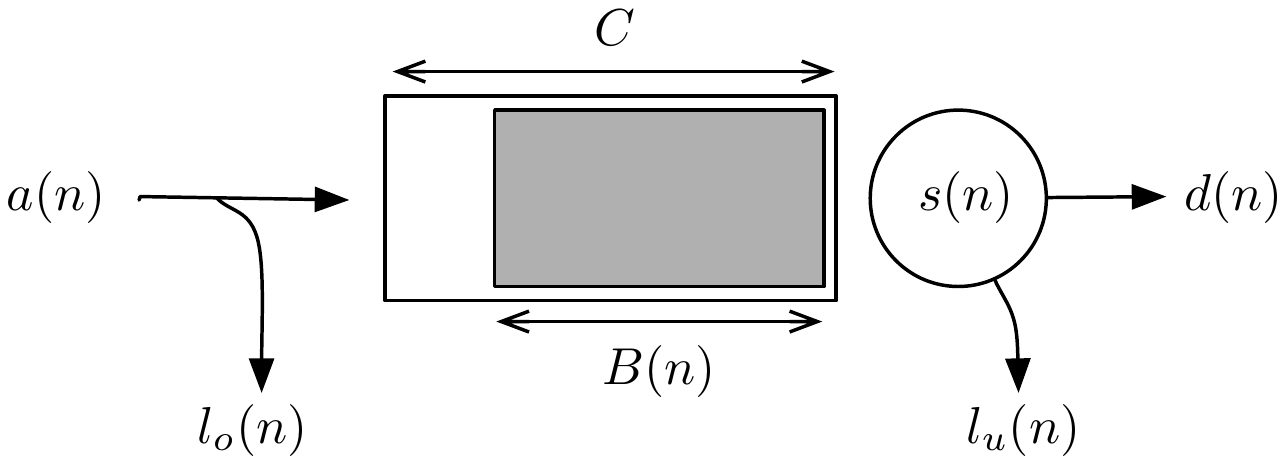}

\centering 
\begin{tabbing}
xxxxxxxxxxxxxxxxx\=xxxxx\= \kill 
\>$a(n)$: \> energy supply ({\it arrivals}) \\ 
\>$s(n)$: \>energy demand ({\it service}) \\ 
\>$d(n)$: \>serviced demand ({\it departures}) \\ 
\>$B(n)$: \>stored energy ({\it backlog}) \\ 
\>$l_o(n)$: \>energy waste ({\it overflow}) \\ 
\>$l_u(n)$: \>energy loss ({\it underflow}) \\ 
\>$C$: \>storage capacity ({\it buffer size}) 
\end{tabbing}
    \caption{Queueing model of energy storage system.} 
    \label{fig:buff_org}
\end{figure}

We model an energy storage system as a finite queueing system, as shown in  
Fig.~\ref{fig:buff_org}. The arrivals to the system consist of  a time-varying  
energy supply from energy sources, the service process consists 
of the  time-varying energy demand from customers, and the departures 
are the serviced demand. 
The stored energy and capacity, which correspond to 
the backlog and capacity in 
a conventional queueing system, are measured in watt hours (Wh). 
In the following, we will use the terms {\it supply} and {\it arrivals}, as well as 
{\it demand} and {\it service} synonymously. 
As a convention, we will employ queueing theory terminology when making 
comparisons to other queueing systems.

\subsection{Dynamics of the leakage queue} \label{subsec:model}
For the purpose of the analysis, we assume that the  
energy processes are discrete-time, fluid flow random processes, where we 
consider deterministic processes as a special case. 
The energy supplied when the storage is at capacity is considered wasted, 
and demand to an empty storage is considered lost.  
We use $a(n), s(n)$, and $d(n)$ to denote the 
energy supply, energy demand, and serviced demand, respectively, in time slot $n$, 
measured in Wh.
We define the {\it net charge } or {\it drift}  of the system in slot $n$, denoted by $\delta (n)$, as the difference 
\[
\delta (n) = a(n) - s(n) \, . 
\]
The amount of energy stored at time slot $n$,  denoted by $B(n)$ and 
referred to as stored energy, 
corresponds to the backlog in the usual terminology of queueing theory. 
Alternative terms in the energy storage literature 
are energy content, state of charge, or battery load. 
The maximum amount of stored energy, referred to as storage capacity, is denoted by $C$. 

We assume that the queue  has a fixed self-discharge ratio $\gamma$ 
with $0 \le \gamma < 1$, with the interpretation that, the stored energy 
at time~$n$, $B(n)$, shrinks to $(1- \gamma) B(n)$ by time slot $n+1$. 
The case $\gamma =0$ refers to a system without self-discharge. 
Since the leakage ratio frequently appears in the form $1- \gamma$, 
we define the complementary leakage ratio $\bar{\gamma}$ as 
\[
\bar{\gamma} = 1- \gamma \, . 
\]
Then, the energy evolution of the storage system can be described in terms of a recursive 
equation \cite{yjod, esd}   by 
\begin{equation}
 B(n)=\min\{[\bar{\gamma} B(n-1)+\delta(n)]^+,C\} \, , 
\label{eq:fin_buf_nonrec}
\end{equation} 
where we use the notation $[x]^+= \max \{x, 0\}$.
We refer to a queueing system with this dynamics as a queue with leakage or leakage queue. 
Descriptions of energy storage systems generally use a fixed self-discharge ratio, 
even though the self-discharge may depend on the 
amount of stored energy, temperature, or other factors. 
Here, the fixed self-discharge ratio represents a long-term average  \cite{survey3}. 

There exist other types of queueing systems where admittance, service, or 
sojourn time are functions of the backlog (stored energy), e.g., 
a {\it G/G/}$\infty$ queue, a queue with 
discouraged arrivals \cite[Chp. 3.3]{Book-Kleinrock}, or a 
reneging queue \cite{reneg-balk2}. 
To see how the leakage queue differs, let us consider 
discrete-time fluid flow versions of these queues. Additionally 
ignoring overflows and underflows, the change of the backlog of the different queues 
in a time slot is given by 
\begin{center}
\begin{tabular}{l l}
Leakage queue: & \hspace{-3pt}$B(n) - B(n-1) = a(n) - s(n) - \bar\gamma B(n-1)$, \\
{\it G/G/}$\infty$ queue: & \hspace{-3pt}$B(n) - B(n-1) = a(n) - s(n) B(n-1)$, \\
Discouraged arriv.: & \hspace{-3pt}
$ B(n) - B(n-1) = \tfrac{a(n)}{B(n-1)+1}- s(n) $, \\
Reneging queue: & \hspace{-3pt}$B(n) - B(n-1) = a(n) \left(1 - \tfrac{B(n-1)}{C}\right) - s(n)$. 
\end{tabular}
\end{center}
Compared to a {\it G/G/}$\infty$ system, the leakage queue has an additional 
process. Compared to queues with discouraged arrivals or reneging, the 
backlog in the leakage queue does not throttle arrivals. 

\smallskip
If time units are expressed in hours, a self-discharge of $5$\% per day for a full battery corresponds to 
a leakage ratio of $\gamma = 0.0021$. This follows since the leakage in a day is $1 - \bar\gamma^{24}$. 
Likewise, we have the correspondences 
\begin{align*}
10\% \text{ per day} \ \ & \sim \ \  \gamma = 0.0044 \, ,  \\
20\% \text{ per day} \ \ & \sim \ \  \gamma = 0.0093 \, ,  \\
50\% \text{ per day} \ \ & \sim \ \  \gamma = 0.0285 \, . 
\end{align*}
Note that the leakage ratios depend on the length of the time slot. 

We use $l_o(n)$ and $l_u(n)$ to denote the overflow and underflow processes, respectively, at 
the storage system. 
In the context of energy storage, $l_o(n)$ is often referred to as the waste of power 
and $l_u(n)$ is referred to as the loss of power. 
The processes are given by 
 \begin{align}
\begin{split}
 l_o(n)&=[\bar{\gamma} B(n-1)+\delta(n)-C]^+ \, , \\
 l_u(n)&=[-\bar{\gamma} B(n-1)-\delta(n)]^+ \, .
\end{split}
\label{eq:define-loss-waste}
 \end{align}

The expression in Eq.~\eqref{eq:fin_buf_nonrec} can be refined to consider other 
pertinent features of an energy storage system. 
For example, some types of storage, such as lithium-ion batteries, 
perform better 
if they are not fully charged. The depth of discharge 
$DoD$ refers to the maximum 
level to which a battery should be charged, expressed as a 
percentage of the storage capacity. 
The charging rate $\alpha_c$ and the discharging rate $\alpha_d$ refer to bounds on the maximum 
power at which the storage can be charged and discharged. 
The charging efficiency~$\eta$ of the storage expresses the amount of energy that is lost 
in the charging process. 
In \cite{ytrp}, these factors are taken into account 
and the evolution of the stored energy is represented by Eq.~\eqref{eq:fin_buf_nonrec}, where 
$C$ is replaced with $C\times DoD $,  and $\delta(n)$ is set to 
 \begin{align}
\delta(n)=\min\bigl\{[\delta(n)]^+,\alpha_c \bigr\} \eta 
-\min\bigl\{[\delta(n)]^+,\alpha_d \bigr\} \,.
\label{eq:fin_buf_nonrec_but}
\end{align} 
As shown in \cite{ytrp}, by defining modified processes for supply and demand  
it is possible to express Eq.~\eqref{eq:fin_buf_nonrec_but} in terms of a simpler energy 
evolution similar to Eq.~\eqref{eq:fin_buf_nonrec}.  
 
We define the bivariate process $\Delta_{\gamma}(m,n)$ as 
 \begin{align*}
 \Delta_{\gamma}(m,n)&= \sum_{k=m+1}^{n} \delta(k) \bar{\gamma}^{n-k} \, .
 \end{align*}
For $\gamma=0$, $\Delta_{\gamma}(m,n)$ is the cumulative net charge process.
With this definition, we present our first result, which is an explicit non-recursive expression for the stored energy in a queue with 
leakage. The result, presented in the next theorem, extends the backlog equation by Cruz and Liu 
for finite-capacity queues~\cite{cruz} to  leakage queues. 

 \begin{theorem} \label{thm:backlog}
Let $B(n)$ be the stored energy in a leakage queue
with finite capacity $C$ and leakage ratio $\gamma$,
as in Eq.~\eqref{eq:fin_buf_nonrec}.
Then
\begin{align}
 B(n) =& \min_{0\leq m \leq n} \bigl\{  \max_{m \leq j \leq n} 
\{C_m \bar\gamma^{n-m} \mathbbm{1}_{j=m} + \Delta_{\gamma}(j,n)
\} \bigr\} \, ,
\label{eq:fin_buff_nonrec}
\end{align}
where $\mathbbm{1}_{j=m}$ is the indicator function
that evaluates to $1$ if $j=m$, and to~$0$ otherwise, and 
$C_m$ is defined as
\[ 
C_m=
\begin{cases} 
B(0)& \text{if } m = 0 \, , \\
C & \text{if } m > 0 \, . 
\end{cases} 
\]
\end{theorem}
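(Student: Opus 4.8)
The plan is to prove Eq.~\eqref{eq:fin_buff_nonrec} by induction on $n$, exploiting the fact that every operation in the recursion~\eqref{eq:fin_buf_nonrec} is monotone and hence commutes with the nested minimum and maximum on the right-hand side. First I would rewrite Eq.~\eqref{eq:fin_buf_nonrec} in the equivalent form $B(n) = \min\{\max\{\bar\gamma B(n-1) + \delta(n), 0\}, C\}$, denote the right-hand side of Eq.~\eqref{eq:fin_buff_nonrec} by $\tilde B(n)$, and check the base case $n = 0$: only $m = j = 0$ is admissible, $\Delta_{\gamma}(0,0) = 0$ as an empty sum, $\mathbbm{1}_{j=m} = 1$, and $C_0 = B(0)$, so $\tilde B(0) = B(0)$.

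For the inductive step, assume $B(n-1) = \tilde B(n-1)$. Since $\bar\gamma > 0$, multiplication by $\bar\gamma$ commutes with $\min_m$ and $\max_j$, so $\bar\gamma \tilde B(n-1) = \min_{0 \le m \le n-1}\{\max_{m \le j \le n-1}\{C_m \bar\gamma^{n-m}\mathbbm{1}_{j=m} + \bar\gamma \Delta_{\gamma}(j,n-1)\}\}$; adding $\delta(n)$ and using the one-step telescoping identity $\bar\gamma\,\Delta_{\gamma}(j,n-1) + \delta(n) = \Delta_{\gamma}(j,n)$ converts each inner term into $C_m \bar\gamma^{n-m}\mathbbm{1}_{j=m} + \Delta_{\gamma}(j,n)$. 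Next, because $x \mapsto \max\{x,0\}$ is nondecreasing it passes through $\min_m$ and $\max_j$; since $\Delta_{\gamma}(n,n) = 0$ and $\mathbbm{1}_{n=m} = 0$ for $m \le n-1$, appending the index $j = n$ to each inner maximum realizes exactly the operation $\max\{\cdot,0\}$, giving $\max\{\bar\gamma \tilde B(n-1) + \delta(n), 0\} = \min_{0 \le m \le n-1}\{\max_{m \le j \le n}\{C_m \bar\gamma^{n-m}\mathbbm{1}_{j=m} + \Delta_{\gamma}(j,n)\}\}$. Finally, $x \mapsto \min\{x, C\}$ is nondecreasing, and for $m = n$ (with $n \ge 1$) the inner maximum collapses to $C_n \bar\gamma^0 \mathbbm{1}_{n=n} + \Delta_{\gamma}(n,n) = C$; hence capping at $C$ is the same as appending the index $m = n$ to the outer minimum, which produces precisely $\tilde B(n)$ and closes the induction.

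The routine but delicate part is the bookkeeping around the indicator $\mathbbm{1}_{j=m}$ and the asymmetric constant $C_m$ (with $C_0 = B(0)$ but $C_m = C$ for $m \ge 1$): one must verify that the initial-condition term $B(0)\bar\gamma^{n}\mathbbm{1}_{j=0}$ is the only place $B(0)$ enters, that it scales as $\bar\gamma^{n-m}$ under repeated multiplication by $\bar\gamma$, and that the two absorption steps — folding $\max\{\cdot,0\}$ into a fresh index $j = n$ and $\min\{\cdot, C\}$ into a fresh index $m = n$ — leave the existing terms and the admissible ranges $m \le j \le n$ intact. Once these index manipulations are laid out carefully, the only analytic input needed is the elementary fact that a nondecreasing scalar map commutes with minima and maxima over finite index sets, together with the telescoping identity $\bar\gamma\,\Delta_{\gamma}(j,n-1) + \delta(n) = \Delta_{\gamma}(j,n)$, both immediate from the definitions.
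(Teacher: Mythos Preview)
Your induction argument is correct and takes a genuinely different route from the paper. The paper proves the identity by a sample-path argument: for the upper bound it fixes an arbitrary $m$ and distinguishes whether the buffer hit zero in $(m,n]$; for equality it exhibits a specific $m$ (the last time the buffer was full, or $m=0$ if that never happened) and uses that no overflow occurs after it. Your proof, by contrast, is purely algebraic: you verify the formula by induction on $n$, pushing the scalar map $x\mapsto \bar\gamma x+\delta(n)$ through the nested $\min_m\max_j$ via monotonicity, and then realizing the clipping operations $[\cdot]^+$ and $\min\{\cdot,C\}$ by appending the fresh indices $j=n$ and $m=n$.

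What each approach buys: the paper's argument gives a transparent operational meaning to the optimizing indices ($m$ is the last overflow epoch, $j$ the last underflow epoch after $m$), which is useful later when the same structure is exploited for the dual system and for bounding $B'(n)$ in the martingale proof. Your approach is shorter and requires no such interpretation; it relies only on the elementary fact that a nondecreasing scalar function commutes with finite $\min$ and $\max$, together with the telescoping identity for $\Delta_\gamma$. It would also generalize mechanically to other recursions built from monotone one-step maps.
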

Eq.~\eqref{eq:fin_buff_nonrec} implies that the 
effect of the initial charge $C_0$ vanishes as time 
increases. 
By taking $C\to\infty$, we immediately get for a 
leakage queue with infinite capacity  that 
\begin{align*} 
 B(n) =&  \max_{0 \leq j \leq n} \{C_0 \bar{\gamma}^n \mathbbm{1}_{j=0} + \Delta_{\gamma}(j,n) \} \, . 
\end{align*}

 \begin{proof}
We first argue that
\begin{align} 
B(n) &\leq \min_{0\leq m \leq n} \bigl\{  
\max_{m \leq j \leq n} 
\{ C_m\bar\gamma^{n-m}\mathbbm{1}_{j=m} + \Delta_\gamma(j,n)\}
\bigr\} \,.
\label{eq:fwd}
 \end{align}
Let $m$ be an arbitrary time slot
with $0\le m\le n$. If $B(j)>0$ for all
$j$ with $m< j\le n$, then
$$B(j) \le \bar\gamma B(j-1) + \delta(j)\,,\quad
\text{for}\ m<j\le n\,, 
$$
which implies
$$
B(n)\le \bar\gamma^{n-m} B(m) + \Delta_\gamma(m,n)\,.
$$
Otherwise, there exists $j$ with $m<j\le n$ such that
$B(j)=0$, which implies
$$
B(n)\le \Delta_\gamma(j,n)\,.
$$
In either case, since $B(m)\le C_m$, it follows that
\begin{align}
B(n)\le 
\max_{m\le j\le n}
\{ C_m\bar\gamma^{n-m}\mathbbm{1}_{j=m} + \Delta_\gamma(j,n)\}\,.
\label{eq:fwd-m}
\end{align}
Since $m$ was arbitrary, and $B(m)\le C_m$,
this establishes Eq.~\eqref{eq:fwd}.

To complete the proof, it suffices
to find one value of $m$ that produces
equality in Eq.~\eqref{eq:fwd-m}.
Choose $m=0$ if $B(j)<C$ for all $j=1,\dots, n$.
Otherwise, choose $m$ to be the index of the
last time slot up to $n$ with $B(m)=C$. 
Since no overflow occurs 
in time slots $j=m+1,\dots, n$,
the recursion in Eq.~\eqref{eq:fin_buf_nonrec}
yields
$$
B(j)=[\bar\gamma B(j-1)+\delta(j)]^+\,,\quad\text{for}\ m<j\le n\,.
$$
Since $B(m)=C_m$ by the choice of $m$, 
it follows that Eq.~\eqref{eq:fwd-m} 
holds with equality.
\end{proof}
 
\subsection{The dual system}\label{subsec:duality}

Numerous analytical methods are 
available for estimating the overflow probability at a buffered link. 
These methods were developed for
applications of queueing theory in telecommunications and 
manufacturing, e.g., \cite{kelly91}. 
However,
in application areas such as multimedia streaming and 
energy storage, underflow is a more serious concern than overflow. 
By developing dual models where the roles
of underflow and overflow events are
switched, the existing know-how for computing overflow probabilities 
can be leveraged 
for the computation of underflow probabilities \cite{dualkeshav,dualravi}. 
We follow this approach by presenting a dual system for a leakage queue. 
Since the dual system is not a physical system, we 
resort to conventional queueing terminology and talk about 
arrivals, service, and backlog. 

\begin{figure}[!t]
    \centering\includegraphics[width=4in]{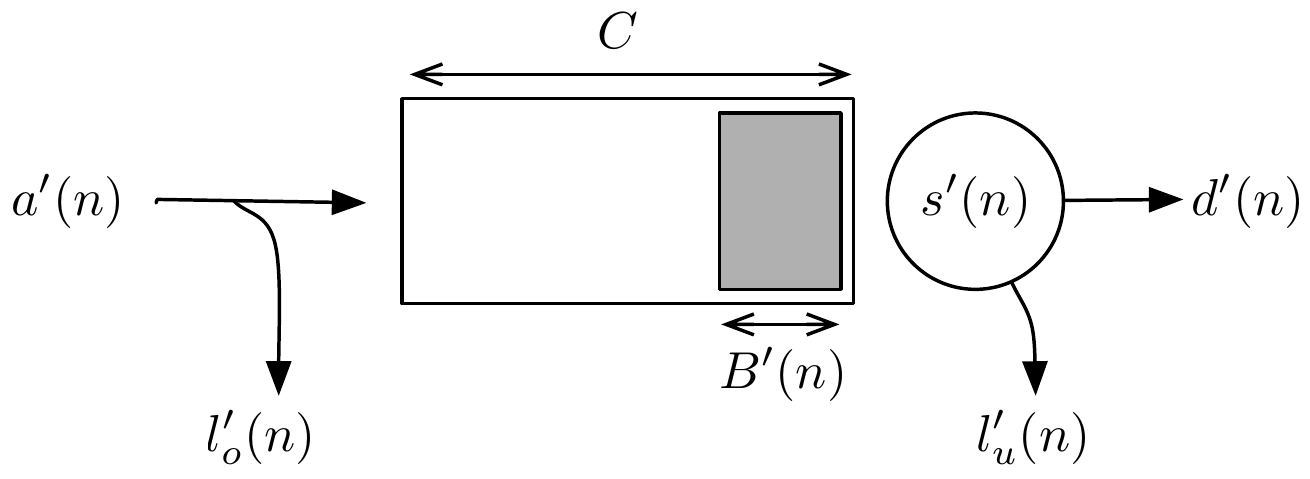}
\centering 
\begin{tabbing}
xxxxxxxxxxxxxxxxxxxx\=xxxxxxxxxxxxxxxxxxxxxx\= \kill 
\>$a'(n)=\gamma C+s(n)$, \\
\> $s'(n)=a(n)$. 
\end{tabbing}
    \caption{Model of the dual system.}
  \label{fig:buff_dual}
\end{figure}

We refer to the leakage queue in Fig.~\ref{fig:buff_org} 
as the original system. The dual 
system is a leakage queue with the same capacity $C$ and leakage ratio $\gamma$.  
Arrivals and service at the dual system, denoted by $a'(n)$ and $s'(n)$,  
are defined as $a'(n)=\gamma C+s(n)$ and $s'(n)=a(n)$, with 
${\delta}'(n)=a'(n)-s'(n)$. 
We denote by $B'(n)$ the backlog process of the dual system. 
The overflow and underflow processes of the dual system,  
denoted by $l'_o(n)$ and $l'_u(n)$, are as in Eq.~\eqref{eq:define-loss-waste}, where 
we replace $\delta(n)$ by $\delta'(n)$ and $B(n)$ by $B'(n)$.
Fig.~\ref{fig:buff_dual} illustrates the queueing model of the dual system. 
With this definition, the backlog $B'(n)$ of the dual system satisfies the recursion 
\begin{align}
B'(n)=\min\bigl\{[\bar\gamma B'(n-1) +\delta'(n)]^+,C\bigr\}\,.
\label{eq:backlog-dual}
\end{align}
Duality of the original and the dual system is established by the following lemma. 
\begin{lemma}
\label{lem:duality}
Given a queue with leakage as shown in Fig.~\ref{fig:buff_org} 
and  the dual system 
shown in Fig.~\ref{fig:buff_dual}. 
If $B(0)+ B'(0)= C$, then the backlog in the 
original system and the dual system satisfy 
\[
B(n)+B'(n)=C
\]
for all $n > 0$.
\end{lemma}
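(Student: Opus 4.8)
The plan is to argue by induction on $n$. The base case $n=0$ is precisely the hypothesis $B(0)+B'(0)=C$. For the inductive step, assume $B(n-1)+B'(n-1)=C$ and deduce $B(n)+B'(n)=C$ from the two recursions in Eq.~\eqref{eq:fin_buf_nonrec} and Eq.~\eqref{eq:backlog-dual}.

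The first key step is a purely algebraic identity between the pre-clamping quantities of the two systems. Since $a'(n)=\gamma C+s(n)$ and $s'(n)=a(n)$, we have $\delta'(n)=a'(n)-s'(n)=\gamma C-\delta(n)$. Abbreviate $u=\bar\gamma B(n-1)+\delta(n)$, the expression inside the $[\,\cdot\,]^+$ in the original recursion. Using $\bar\gamma=1-\gamma$ and substituting $B'(n-1)=C-B(n-1)$ from the induction hypothesis, a short computation gives $\bar\gamma B'(n-1)+\delta'(n)=\bar\gamma C-\bar\gamma B(n-1)+\gamma C-\delta(n)=C-u$, where the $\gamma C$ from $a'(n)$ and the $\bar\gamma C$ coming from $\bar\gamma B'(n-1)$ combine to exactly $C$ because $\gamma+\bar\gamma=1$. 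Thus the update arguments of the two systems are reflections of one another about $C/2$.

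The second step is to observe that the map $x\mapsto\min\{[x]^+,C\}$ is exactly the operation of clamping $x$ to the interval $[0,C]$, and that this clamp obeys the reflection identity $\min\{[x]^+,C\}+\min\{[C-x]^+,C\}=C$ for every real $x$; this is checked by splitting into the three cases $x<0$, $0\le x\le C$, and $x>C$. Applying this identity with $x=u$ and combining with the first step yields $B(n)+B'(n)=\min\{[u]^+,C\}+\min\{[C-u]^+,C\}=C$, which closes the induction.

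I do not expect a genuine obstacle here. The only points that need care are the bookkeeping of the constant $C$ in the algebra of the first step — in particular noticing the cancellation $\gamma C+\bar\gamma C=C$ — and stating the reflection identity for the clamp for \emph{all} real $x$ rather than only for $x\in[0,C]$, since $u$ may be negative or exceed $C$ (these are precisely the situations in which one system underflows while the other overflows).
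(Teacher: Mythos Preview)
Your proposal is correct and follows essentially the same approach as the paper: both argue by induction, both reduce the inductive step to the observation that the pre-clamping arguments in the two recursions are $u$ and $C-u$ (via the same algebra $\bar\gamma C+\gamma C=C$ and $\delta'(n)=\gamma C-\delta(n)$), and both conclude using the reflection identity for the clamp, which the paper states in the equivalent form $C-\min\{[x]^+,C\}=\min\{[C-x]^+,C\}$.
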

From the lemma it follows immediately that 
$l'_o(n)=l_u(n)$ and $l'_u(n)=l_o(n)$, as long as the 
dual system is properly initialized. 
Hence, we can obtain the underflow probability in the original 
system by computing the overflow probability in the dual system. 

\begin{proof}  We proceed by induction.
The base case is covered by the assumption that
$B(0)+B'(0)=C$.
For the inductive step,
suppose that $B(n-1)+B'(n-1)=C$
for some $n>0$. In particular, $0\le B'(n-1)\le C$.
We rewrite Eq.~\eqref{eq:backlog-dual} 
in terms of $C-B'(n)$ and apply the identity
$$
C-\min\bigl\{[x]^+,C\bigr\}= 
\min\bigl\{[C-x]^+,C\bigr\}
$$
to obtain
\begin{align*}
C-B'(n)& = C- \min\bigl\{[\bar{\gamma} B'(n-1)+\delta(n)]^+,C\bigr\}\\
&= \min\bigl\{[C- \bar{\gamma} B'(n-1)-\delta'(n)]^+, C \bigr\}\\
&= \min\bigl\{[\bar\gamma(C- B'(n-1)) +\gamma C-\delta'(n)]^+, C \bigr\}\,.
\end{align*}
Since $C-B'(n-1)=B(n-1)$ by the inductive hypothesis, 
and $\gamma C-\delta'(n)=\delta(n)$,
it follows that
$$
C-B'(n) = \min\bigl\{[\bar \gamma B(n-1)+\delta(n)]^+,C\bigr\}\,.
$$
We conclude with Eq.~\eqref{eq:fin_buf_nonrec} that $C-B'(n)=B(n)$.
\end{proof}

We can exploit the dual system to obtain
an alternate expression for the backlog.
\begin{corollary}
\label{coro:B-ref}
The backlog in a leakage queue with capacity~$C$ and leakage ratio $\gamma$ 
is given by 
\begin{align*}
B(n) = \max_{0\leq m \leq n} \bigl\{  \min_{m \leq j \leq n} 
\{ 
C_0 \bar\gamma^n \mathbbm{1}_{j=m=0}
+ C \gamma^{n-j} \mathbbm{1}_{j>m}
+ \Delta_{\gamma}(j,n)
\} \bigr\}  \, . 
\end{align*}
\end{corollary}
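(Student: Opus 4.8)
The plan is to obtain the formula directly from Lemma~\ref{lem:duality} and Theorem~\ref{thm:backlog} applied to the dual system. First I would initialize the dual queue by $B'(0)=C-C_0$, so that $B(0)+B'(0)=C$ and Lemma~\ref{lem:duality} gives $B(n)=C-B'(n)$ for every $n$. The dual system of Fig.~\ref{fig:buff_dual} is itself a leakage queue with capacity $C$ and leakage ratio $\gamma$ --- its backlog satisfies Eq.~\eqref{eq:backlog-dual}, which has exactly the form of Eq.~\eqref{eq:fin_buf_nonrec} --- so Theorem~\ref{thm:backlog} applies to $B'(n)$ verbatim:
\[
B'(n)=\min_{0\le m\le n}\Bigl\{\max_{m\le j\le n}\{C'_m\bar\gamma^{n-m}\mathbbm{1}_{j=m}+\Delta'_\gamma(j,n)\}\Bigr\},
\]
where $\Delta'_\gamma$ is formed from $\delta'(k)=a'(k)-s'(k)$, $C'_0=B'(0)=C-C_0$, and $C'_m=C$ for $m>0$. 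Substituting $B(n)=C-B'(n)$ and using that $x\mapsto C-x$ reverses order turns the outer $\min$ into a $\max$ and the inner $\max$ into a $\min$, leaving only the task of pushing the constant $C$ through the bracket.

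The key computation is to re-express $\Delta'_\gamma(j,n)$ through $\Delta_\gamma(j,n)$. From the definition of the dual processes, $\delta'(k)=\gamma C+s(k)-a(k)=\gamma C-\delta(k)$, so
\[
\Delta'_\gamma(j,n)=\gamma C\sum_{k=j+1}^{n}\bar\gamma^{n-k}-\Delta_\gamma(j,n),
\]
and the finite geometric series equals $(1-\bar\gamma^{n-j})/(1-\bar\gamma)$; since $1-\bar\gamma=\gamma$, the first term collapses to $C(1-\bar\gamma^{n-j})$. Hence $C-\Delta'_\gamma(j,n)=C\bar\gamma^{n-j}+\Delta_\gamma(j,n)$ --- the net-charge term together with a geometrically decaying multiple of the capacity.

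The remaining step is to account for the boundary term $C'_m\bar\gamma^{n-m}\mathbbm{1}_{j=m}$ after the complement, which splits into three cases. For $j>m$ the indicator vanishes and the $j$-term is just $C\bar\gamma^{n-j}+\Delta_\gamma(j,n)$. For $j=m>0$ one has $C'_m=C$, which cancels $C\bar\gamma^{n-m}$ exactly and leaves only $\Delta_\gamma(m,n)$. For $j=m=0$ one has $C'_0=C-C_0$, so $C\bar\gamma^{n}-(C-C_0)\bar\gamma^{n}=C_0\bar\gamma^{n}$. Collecting the three cases reproduces the stated $\max$--$\min$ expression, and I expect this bookkeeping --- keeping the $m=0$ versus $m>0$ forms of $C'_m$ straight under $x\mapsto C-x$ so that the indicators $\mathbbm{1}_{j=m=0}$ and $\mathbbm{1}_{j>m}$ emerge with the right coefficients --- to be the only real work; applying Theorem~\ref{thm:backlog} to the dual and summing the geometric series are routine.
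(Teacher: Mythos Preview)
Your approach is exactly the paper's: the proof there reads in its entirety ``Write $B(n)=C-B'(n)$ and apply Theorem~\ref{thm:backlog} to the dual system,'' and your proposal simply fills in the algebra behind that line. One small point: your case $j>m$ correctly yields $C\bar\gamma^{\,n-j}+\Delta_\gamma(j,n)$, whereas the corollary as printed has $C\gamma^{n-j}$; the $\bar\gamma$ you obtain is the right discount factor (consistent with Theorem~\ref{thm:backlog}), so this appears to be a typo in the stated formula rather than an error in your derivation.
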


\begin{proof}
Write $B(n)=C-B'(n)$ and apply Theorem~\ref{thm:backlog} 
to the dual system.
\end{proof}

\begin{figure}[t]
  \centering
	\subfigure[$C=40$ kWh.]{
 	\includegraphics[width=5in]{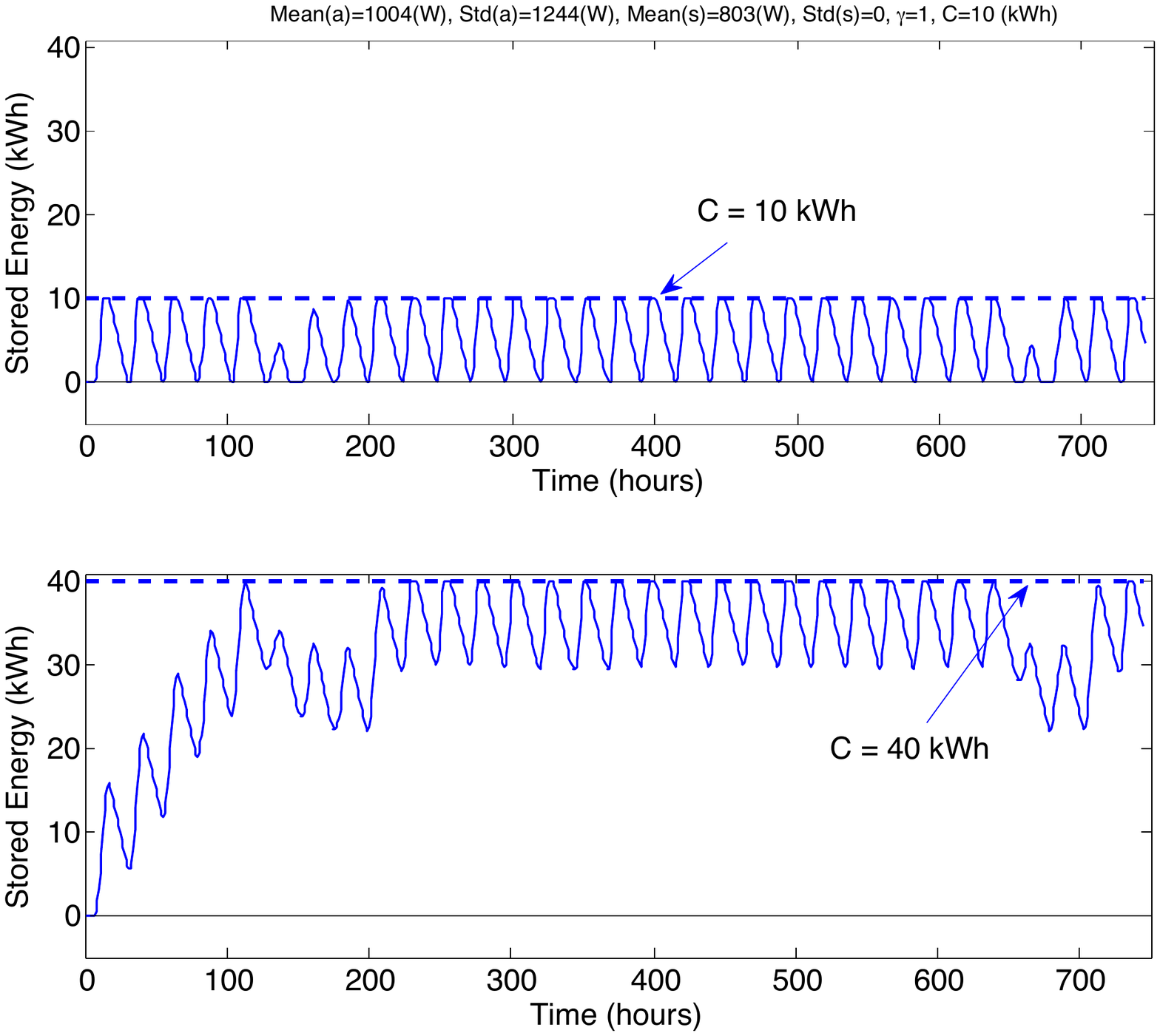}
	\label{fig:LA_solar_2}
	}
	\subfigure[$C=10$ kWh.]{
 	\includegraphics[width=5in]{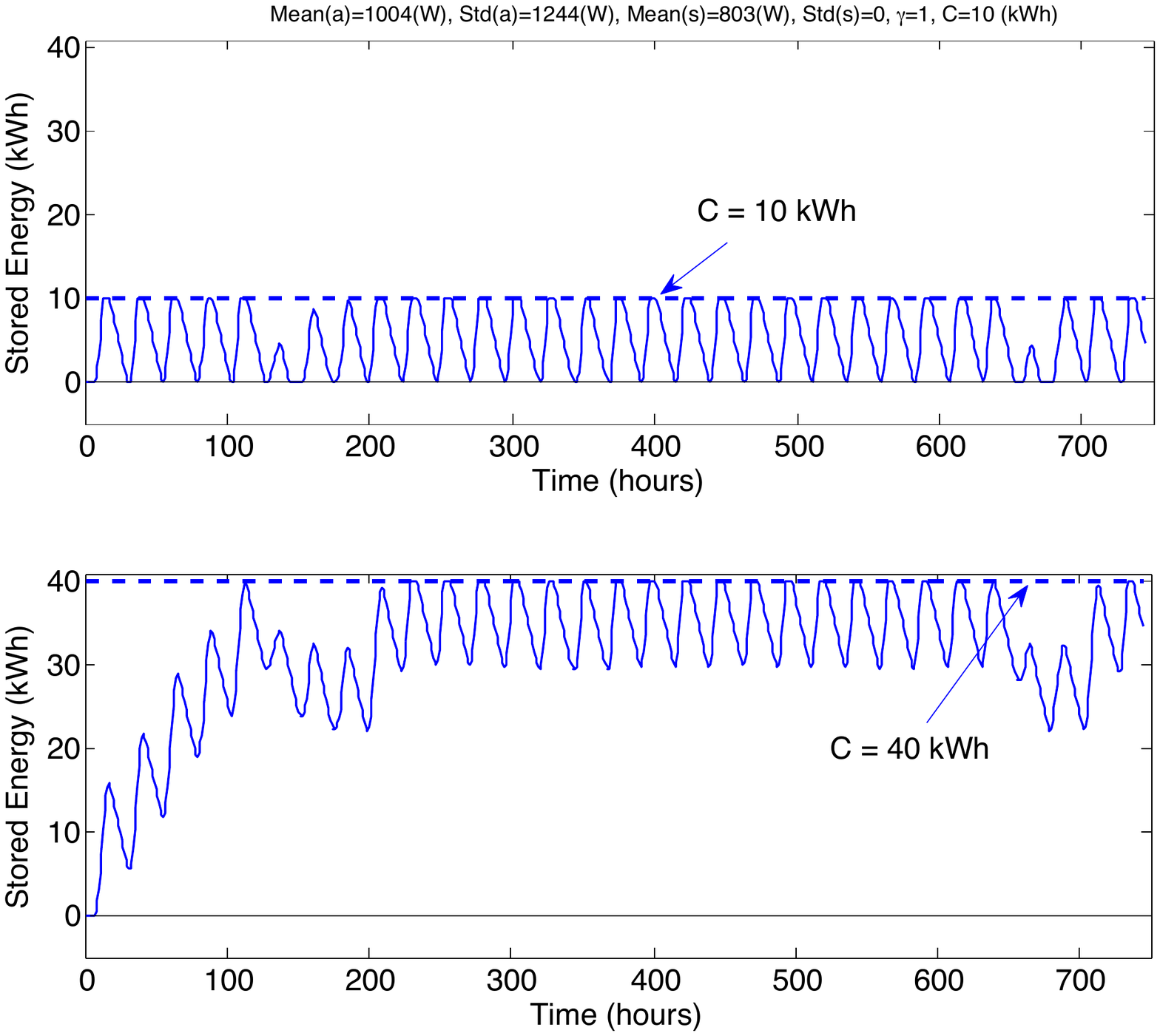}
	\label{fig:LA_solar_1}
	}
\caption{Stored energy without self-discharge with solar power supply.} 
	\label{fig:LA_solar}
\end{figure}
\begin{figure}[t]
  \centering
	\subfigure[$C=40$ kWh.]{
 	\includegraphics[width=5in]{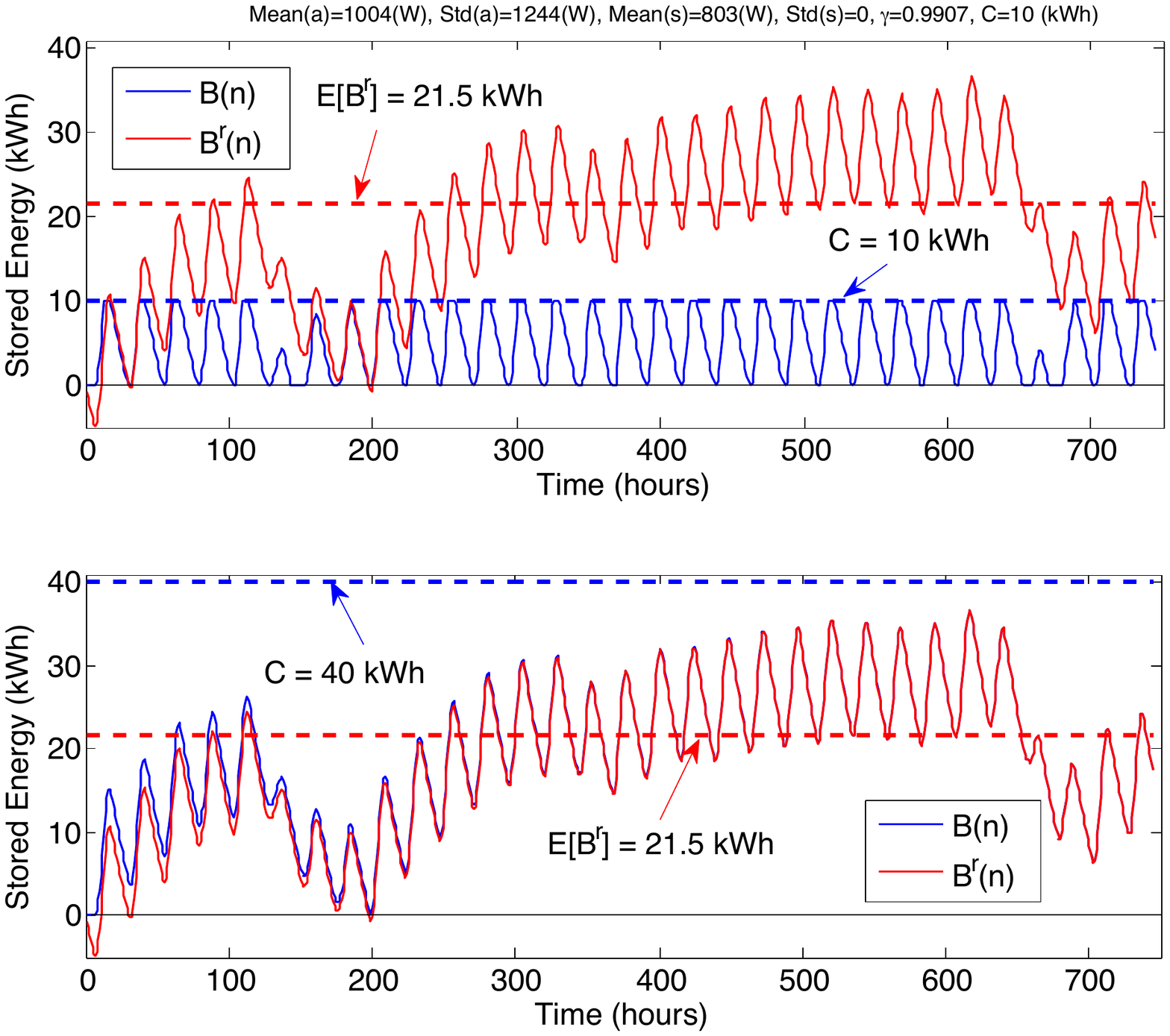}
	\label{fig:LA_solar_leak_40}
	}
	\subfigure[$C=10$ kWh.]{
 	\includegraphics[width=5in]{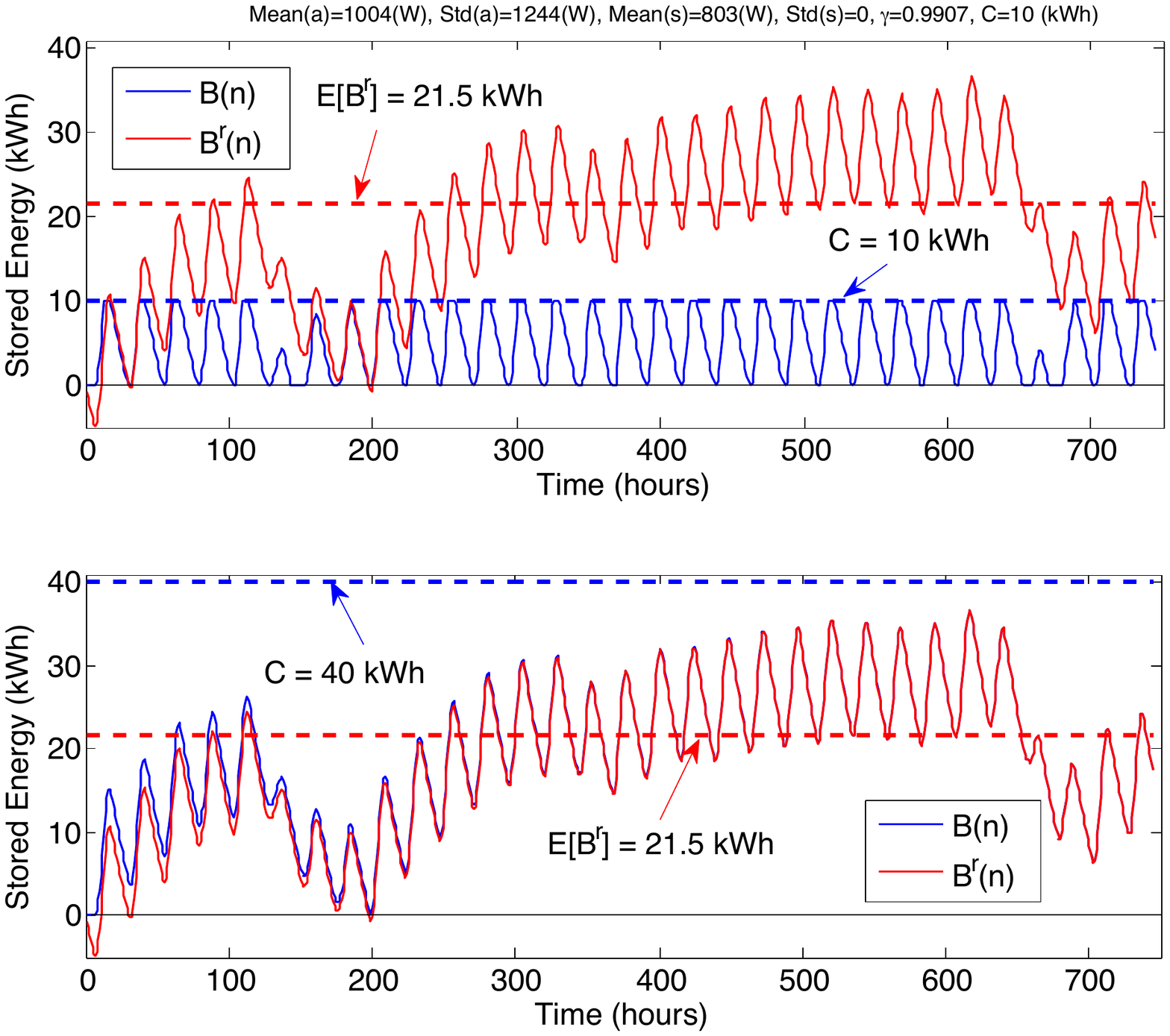}
	\label{fig:LA_solar_leak_10}
	}
\caption{Stored energy with 20\% self-discharge per day with solar power supply.} 
	\label{fig:LA_solar_leak}
\end{figure}

\subsection{Two regimes for the analysis}
\label{subsec:ref_system}

In this subsection, we make 
observations that will prove crucial for 
the analysis of leakage queues. 
Throughout, we will work with a net charge that is positive on average, 
since the storage system will be mostly empty otherwise. 
We find that the leakage queue 
operates in two regimes with 
fundamentally different behaviors. In one regime, 
the stored energy is stable at 
a point below the storage capacity. 
Here, the leakage queue
behaves similarly to a reference system, 
which has  infinite capacity and allows 
the stored energy to become negative. 
In the other regime, the stored energy  is 
generally close to the capacity.
Here, 
the leakage queue is similar to a conventional 
finite-capacity queueing system in overload.

We illustrate the different regimes with the aid of a numerical example 
drawn from an energy storage system with a photo-voltaic (PV) energy source and constant demand. 
We use  the PV energy generation 
for a residential rooftop system, which is based 
on an hourly  data set of the typical solar irradiance in Los Angeles for 
the month of July \cite{sam-data}.  
The resulting solar energy is calculated with the 
System Advisor Model (SAM) software~\cite{sam}, 
where solar panels are scaled so that the  average energy supply  per hour is 
$1$~kWh.
The average demand per hour is assumed to be constant and set to 
$800$~Wh, which is 80\% of the supply. 
This is approximately the average power consumption per household in New Zealand \cite{household}.  

For calibration, we first consider a queue without leakage, that is, 
$\gamma = 0$.  In Fig.~\ref{fig:LA_solar} we depict the energy content 
for systems with storage capacity $C=10$~kWh and $C=40$~kWh, where we assume that the storage 
is initially empty. Observe that the data captures the diurnal pattern of solar energy. 
As expected,  once the storage fills up, the stored energy is always close to the storage capacity. 

Before we discuss how the outcome changes in the presence of self-discharge, we introduce a reference 
system that differs from the leakage queue in two ways. First, the reference 
system  has infinite storage capacity ($C=\infty$). Second, the stored energy is allowed to 
take negative values, where a negative occupancy can be thought of as an energy deficit. 
The dynamics of the stored energy in the reference system, denoted by~$B^r$, 
simplify to 
\begin{align}
B^r(n)=\bar{\gamma} B^r(n-1)+\delta(n) \, . 
\label{eq:def_Br}
\end{align}
Solving the recursion yields the formula
\begin{align*}
B^r(n)=B(0) \bar\gamma^n  + \sum_{m=0}^n \Delta_\gamma(m,n)\,.
\end{align*}
If $\delta (n)$ describes a stationary process with 
$\delta (n) =_{\mathcal{D}} \delta$ for all $n$, where `$=_{\mathcal{D}}$' 
indicates equality in distribution, 
the expected value is 
\[
\E[B^r(n)]=\bar{\gamma} \E[B^r(n-1)]+\E[\delta] \, .
\]
If there exists a steady state for $B^r$, then,  for $n\to\infty$,  
the expected stored energy, denoted by $\E [B^r]$, is given by
\begin{equation}
\E[B^r]=\lim_{n \to \infty}\E[B^r(n)]=\frac{\E[\delta]}{\gamma} \, . 
\label{eq:reference-bstar}
\end{equation}
In the next section, we will prove that $B^r(n)$ always converges
to a steady state, with expected value $\E[B^r]$.

We now re-compute the numerical example from Fig.~\ref{fig:LA_solar} 
with a leakage ratio of 
$\gamma = 0.0093$. With the given supply and demand we obtain that 
$\E[B^r] =21.5$~kWh, where, for the data set,  $\E[B^r]$ 
is the average net charge divided by $\gamma$. 
Fig.~\ref{fig:LA_solar_leak} shows the stored energy 
for the reference system and 
the finite-capacity leakage queue. 
Note that the reference system initially takes negative values for $B^r(n)$. 
In Fig.~\ref{fig:LA_solar_leak_40}, where $C > \E[B^r]$, 
the stored energy in the reference system tracks the 
energy in the finite-capacity leakage queue with a high degree of accuracy. 
In Fig.~\ref{fig:LA_solar_leak_10}, we show the 
results for $C < \E[B^r]$. Here, the 
stored energy in the finite-capacity leakage queue is very different from that of 
the reference system.  In fact, the dynamics of the leakage queue 
resemble that of the finite-capacity queue without leakage 
(shown in Fig.~\ref{fig:LA_solar_1}). 

In the next sections, we will establish that our 
observations in the numerical example 
extend to other supply and demand distributions. 
It turns out that all leakage queues 
with $\gamma > 0$ operate in one of two modes, which we will refer to as 
 {\it capacity-dominated regime} and  {\it leakage-dominated regime}, with 
very different characteristics. 

 \begin{itemize}
 \item {\bf Leakage-dominated regime ($C > \E[B^r]$):}  This regime, 
illustrated in Fig.~\ref{fig:LA_solar_leak_40},  
is characterized by an average stored energy below the 
storage capacity.  
This is unlike a conventional finite capacity 
queueing systems (with $\gamma=0$), where the storage is 
full or close to full for $\E[\delta] > 0$.
 
\item {\bf Capacity-dominated regime ($C < \E [B^r]$):} In 
this regime, illustrated in Fig.~\ref{fig:LA_solar_leak_10},  the stored energy is always close to the storage capacity $C$,  
which necessarily results in a high probability of overflow. 
The system behaves similarly to a conventional 
finite-capacity queueing system 
without leakage.

\end{itemize}

We will study the regimes in detail in Section~\ref{sec:bounds}, where we 
find that we must use different analysis methods for each regime. 

 \begin{figure}[!t]
\centering
\includegraphics[width=5in]{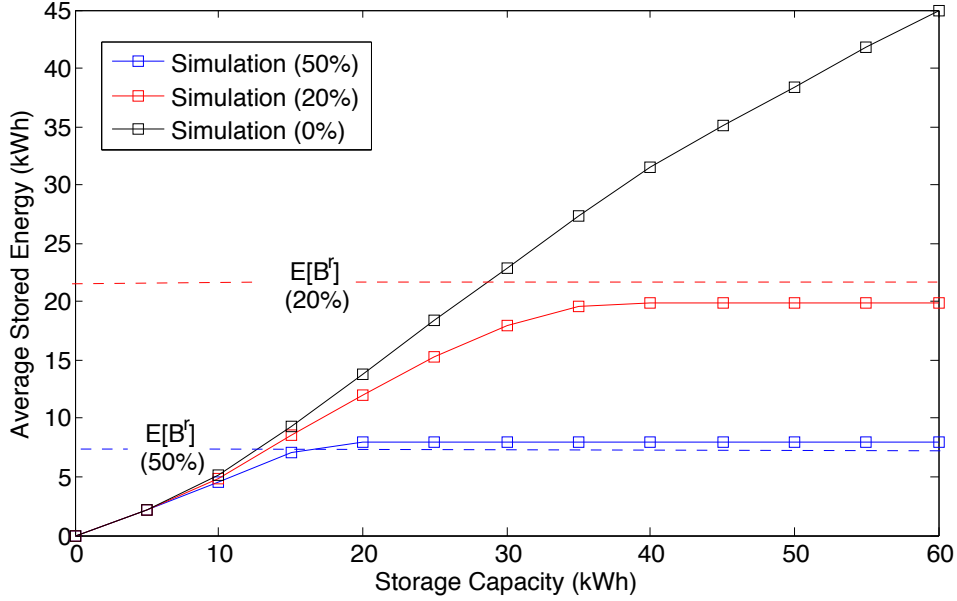}
\caption{Average stored energy as a function of the storage capacity $C$. }

\label{fig:average_backlog}
\end{figure}

An interesting property of leakage queues is that 
increasing the storage capacity much beyond $\E[B^r]$ 
does not result in significant benefits. 
To emphasize this, we consider the same data set as 
used for Figs.~\ref{fig:LA_solar} and~\ref{fig:LA_solar_leak} 
and compute the average stored energy as a function of the storage capacity. 
In Fig.~\ref{fig:average_backlog} we show the 
results for self-discharge ratios of $\gamma = 0, 0.0093, 0.0285$.
Without self-discharge ($\gamma=0$), the average 
stored energy is always close to the capacity.  
For $\gamma>0$, on the other hand, the average storage 
approaches a constant even as the capacity goes to infinity.
The value of this constant
is close to the stored energy in
the reference system. 
To our knowledge, this feature of energy storage systems with self-discharge 
has not received any attention. 

Obviously, our observations of the stability of the leakage queue and 
the characterization of the leakage-dominated regime by the reference system 
are limited to the depicted data set. In the next sections, we will 
try to corroborate our findings for general random processes.


 \section{Stability and Convergence}
\label{sec:stability}

In this section, we prove 
that the leakage queue defined by
Eq.~\eqref{eq:fin_buf_nonrec}
converges for $0<\gamma<1$ 
to a unique steady state as $n\to\infty$.
The net charge $\delta(n)$ for $n = 0, 1, \ldots$ is assumed to 
be an i.i.d. sequence of random variables of finite
expectation $\E[\delta]$.
Assuming that $\E[\delta]>0$, we characterize the steady-state 
distribution of the stored energy,~$B$,
in terms of the steady state of the reference system $B^r$ and
the capacity $C$. In the leakage-dominated regime, where 
$C> E[B^r]$, we prove that the steady state is close to 
that of the reference system. 
In the capacity-dominated regime, where $C < E[B^r]$,
the expected drift in the dual system is
negative. The steady state
resembles that of a stable queue without leakage. 

The stability and convergence results
extend to leakage queues with infinite capacity.
Obviously, a leakage queue of infinite capacity is
always leakage-dominated.
%

\subsection{Stability} 

A leakage queue of finite capacity $C$ is stable
by definition,
since $0 \le B(n) \le C$ for all $n\ge 0$. 
We will derive bounds on the distribution 
of $B(n)$ that do not depend on the value of $C$. 

By the recursive definition 
from Eq.~\eqref{eq:fin_buf_nonrec}, we have 
$
B(n)\le \bar \gamma B(n-1) + [\delta(n)]^+ 
$. 
Solving the recursion, and then using 
the i.i.d. assumption on the drift, we obtain 
\begin{align*}
B(n)&\le  B(0)\bar \gamma^n +\sum_{m=1}^n [\delta(m)]^+
\bar\gamma^{n-m}\\
&=_ \mathcal{D} B(0)\bar \gamma^n +
\sum_{m=0}^{n-1} [\delta(m)]^+\bar\gamma^m=: Y(n)\,,
\end{align*}
In particular,
$$\E[B(n)] \le \E[Y(n)] = \frac{1-\bar\gamma^{n}}{\gamma}
\E\bigl[ [\delta]^+\bigr]\,.
$$
As $n\to\infty$, the random variables
converge to 
$$
Y:= \sum_{m=0}^{\infty} \bar\gamma^m [\delta(m)]^+\,.
$$
By monotone convergence, $Y$ has finite
mean, and is almost surely finite.  By construction,
$
\P(B(n)>x) \le \P(Y>x)
$ 
holds for all $n\ge 0$. Since $B(n)$ is nonnegative,
it follows that the leakage queue is stable.

For the reference system, a similar argument, using dominated convergence,
shows that $B^r(n)$ converges in distribution to
\begin{equation} 
\label{eq:b_r_inf}
B^r : =\sum_{m=0}^\infty \delta(m)\bar\gamma^{m}\,.
\end{equation}
Since this series converges absolutely almost surely,
the reference system is stable as well.
We will refer to $B^r$ as the steady-state distribution
of the reference system. 
Note that the expected value of Eq.~\eqref{eq:b_r_inf} 
is equal to Eq.~\eqref{eq:reference-bstar}, justifying 
our earlier use of the notation~$\E [B^r]$.

\subsection{Convergence to steady state}\label{subsec:convergence} 

We next show that the stored energy in a leakage queue
converges in distribution to a steady state. 
\begin{theorem}
\label{thm:convergence}
Let $\delta(n)$ be an i.i.d. sequence of finite mean,
$C>0$, and $0 < \gamma<1$.
Then the stored energy $B(n)$ in a queue 
with leakage ratio~$\gamma$, storage capacity $C$, and 
drift $\delta(n)$ converges in distribution to a steady state
that does not depend on the initial condition.
\end{theorem}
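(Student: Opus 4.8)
The plan is to recognize the recursion in Eq.~\eqref{eq:fin_buf_nonrec} as the iteration of a \emph{contraction} and to run a backward (Loynes-type) coupling. For $d\in\RR$ define the update map $\phi_d(x)=\min\{[\bar\gamma x+d]^+,C\}$, so that $B(n)=\phi_{\delta(n)}(B(n-1))$. Writing $\phi_d=h_2\circ h_1\circ g_d$ with $g_d(x)=\bar\gamma x+d$, $h_1(u)=[u]^+$, and $h_2(v)=\min\{v,C\}$, and using that $g_d$ is $\bar\gamma$-Lipschitz while $h_1,h_2$ are $1$-Lipschitz, one obtains the uniform bound $|\phi_d(x)-\phi_d(y)|\le\bar\gamma\,|x-y|$ for all $x,y$; moreover $\phi_d$ maps $\RR$ into $[0,C]$. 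This single inequality drives the whole argument.

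First I would dispatch independence of the initial condition. Run two copies $B_1,B_2$ of the leakage queue on the same drift sequence $\delta(n)$ from initial states $B_1(0),B_2(0)$. Since $B_i(1)\in[0,C]$, after shifting the time origin we may assume $B_i(0)\in[0,C]$, and iterating the Lipschitz bound yields $|B_1(n)-B_2(n)|\le\bar\gamma^{\,n}|B_1(0)-B_2(0)|\le\bar\gamma^{\,n}C\to0$. Hence any two initializations are asymptotically indistinguishable, pathwise, so the limit law cannot depend on $B(0)$; and, coupling an arbitrary stationary law against the all-zero start, a steady state, once it is shown to exist, is unique.

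Next I would construct the steady state. Extend $\{\delta(n)\}$ to a two-sided i.i.d.\ sequence $\{\delta(n)\}_{n\in\mathbb Z}$ with the same marginal, and for $N\ge1$ set $Z_N=\phi_{\delta(0)}\circ\phi_{\delta(-1)}\circ\cdots\circ\phi_{\delta(-N+1)}(0)$, the state at time $0$ of a queue launched from $0$ at time $-N$. Passing from $Z_N$ to $Z_{N+1}$ feeds the \emph{same} block of $N$ maps $\phi_{\delta(0)},\dots,\phi_{\delta(-N+1)}$ the two inputs $\phi_{\delta(-N)}(0)\in[0,C]$ and $0$, so $|Z_{N+1}-Z_N|\le\bar\gamma^{\,N}C$; summing over $N$, the sequence $(Z_N)$ is almost surely Cauchy and converges to a finite limit $Z_\infty$, measurable with respect to $\{\delta(n):n\le0\}$. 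Because the drift sequence is i.i.d., $Z_N$ has the same distribution as $B(N)$ started from $B(0)=0$ (the forward composition $\phi_{\delta(N)}\circ\cdots\circ\phi_{\delta(1)}(0)$ and the backward one apply identically distributed blocks of variables), so $B(n)$ converges in distribution to $Z_\infty$; peeling off the outermost map and using continuity of each $\phi_d$ together with stationarity shows the law of $Z_\infty$ is invariant for the Markov kernel $x\mapsto\mathrm{law}(\phi_\delta(x))$, which completes the identification.

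The one step that needs care is the backward-coupling bookkeeping above: one must keep the \emph{output} time pinned at $0$ while sending the \emph{input} time to $-\infty$, and invoke the distributional identity between forward and backward compositions of an i.i.d.\ sequence; after that, the geometric estimate $|Z_{N+1}-Z_N|\le\bar\gamma^{\,N}C$ gives almost-sure convergence for free, and almost-sure convergence implies convergence in distribution. It is worth noting that finiteness of $\E[\delta]$ is not actually needed for this statement — boundedness of the state interval $[0,C]$ does all the work — but the hypothesis is retained since it is used for the reference system, where $\E[B^r]$ must be finite.
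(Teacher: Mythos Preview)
Your proof is correct, and it takes a genuinely different route from the paper's. Both arguments exploit that the one-step map is a $\bar\gamma$-contraction, but at different levels. The paper works on the space of probability distributions with finite first moment, equipped with the Kantorovich--Rubinstein (Wasserstein-1) distance: it shows via a separate lemma that scaling, $[\,\cdot\,]^+$, $\min\{\cdot,C\}$, and convolution with an independent increment are each nonexpansive in that metric, concludes that the induced map $\Psi$ on laws is a $\bar\gamma$-contraction, and invokes Banach's fixed-point theorem. You instead use the \emph{pointwise} Lipschitz bound $|\phi_d(x)-\phi_d(y)|\le\bar\gamma|x-y|$ and a Loynes backward coupling on a two-sided extension of the drift sequence, obtaining an almost-sure limit $Z_\infty$ whose law is the steady state.

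Your approach is more elementary --- it avoids the Wasserstein machinery and the auxiliary lemma entirely --- and it delivers a stronger conclusion: pathwise coupling of any two initializations, and almost-sure convergence of the backward iterates, not just convergence in distribution. Your observation that finiteness of $\E[\delta]$ is not needed here is also correct; the paper's proof does use it, because the Kantorovich--Rubinstein distance is only finite between laws with finite mean. On the other hand, the paper's metric framework pays an immediate dividend in the next corollary, where the $d(\cdot,\cdot)$ distance between the steady state $B$ and the reference system $B^r$ is bounded directly from the contraction estimate; your pathwise argument does not give that for free, since $B^r$ is unbounded.
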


The key observation is that Eq.~\eqref{eq:fin_buf_nonrec}
defines a contraction in a suitable metric
on the space of probability distributions.
Given two random variables $X_1$, $X_2$,
with cumulative distribution functions (CDF)
$F_1(x)=\P(X_1\le x)$ and $F_2(x)=\P(X_2\le x)$.
Their Kantorovich-Rubinstein distance is defined by
$$
d(F_1,F_2) = \int_{-\infty}^\infty
|F_1(x)-F_2(x)|\, dx\,. 
$$ 
With a slight abuse of notation, we write
$d(X_1,X_2)$ in place of $d(F_1,F_2)$
for the Kantorovich-Rubinstein
distance between the distributions
of $X_1$ and $X_2$.
The following technical lemma provides the necessary estimates.

\begin{lemma}
\label{lem:KR}
Let $X_1$ and $X_2$ be random variables of finite mean.
Then
\begin{enumerate}
\item $d(\alpha X_1,\alpha X_2)=\alpha d(X_1,X_2)$
for every $\alpha>0$.
\item $d([X_1]^+,[X_2]^+)\le d(X_1,X_2)$.
\item $d(\min\{X_1,C\}),\min\{X_2,C\})\le d(X_1,X_2)$
for every $C\in\RR$.
\item $d(X_1+Y,X_2) \le d(X_1,X_2)$ for every random variable $Y$
of finite mean that is independent of $X_1$ and $X_2$.
\end{enumerate}
\end{lemma}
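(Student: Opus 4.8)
The plan is to work throughout with the defining formula $d(X_1,X_2)=\int_{-\infty}^{\infty}|F_1(x)-F_2(x)|\,dx$ and to handle each part by tracking how the relevant operation transforms a distribution function, then comparing the two integrands pointwise. I would first record finiteness: since $d(X,c)=\E[|X-c|]$ for a constant $c$, the triangle inequality gives $d(X_1,X_2)\le \E[|X_1|]+\E[|X_2|]<\infty$, so every integral below is well defined. Part (1) is then immediate, since $F_{\alpha X_i}(x)=F_i(x/\alpha)$ for $\alpha>0$, and the substitution $u=x/\alpha$ extracts the factor $\alpha$, giving equality.

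Parts (2) and (3) both follow from the principle that restricting the integrand to a subinterval, while leaving it otherwise unchanged, cannot increase the integral of a nonnegative function. For part (2) I would use $F_{[X_i]^+}(x)=0$ for $x<0$ and $F_{[X_i]^+}(x)=F_i(x)$ for $x\ge 0$; hence $|F_{[X_1]^+}-F_{[X_2]^+}|$ vanishes on $(-\infty,0)$ and equals $|F_1-F_2|$ on $[0,\infty)$, so its integral is at most $d(X_1,X_2)$. Part (3) is the mirror image: $F_{\min\{X_i,C\}}(x)=F_i(x)$ for $x<C$ and $=1$ for $x\ge C$, so the integrand equals $|F_1-F_2|$ on $(-\infty,C)$ and vanishes on $[C,\infty)$, again giving at most $d(X_1,X_2)$. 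The jumps at the single points $x=0$ and $x=C$ do not affect the Lebesgue integral.

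Part (4) is the crux, and here I would begin by pointing out that the inequality cannot hold as printed: taking $X_1=X_2\equiv c$ makes the right-hand side zero, while the left-hand side equals $d(c+Y,c)=\E[|Y|]>0$ for any nondegenerate $Y$. The inequality that is both correct and the one actually needed in the contraction step of Theorem~\ref{thm:convergence} (where the common drift $\delta(n)$ is added to \emph{both} coupled copies of the queue) is the convolution estimate with the same independent $Y$ added to each argument, $d(X_1+Y,X_2+Y)\le d(X_1,X_2)$; I read the printed $X_2$ as a typo for $X_2+Y$. To prove it I would condition on $Y$: independence gives $F_{X_i+Y}(x)=\E[F_i(x-Y)]$, whence $|F_{X_1+Y}(x)-F_{X_2+Y}(x)|\le \E[\,|F_1(x-Y)-F_2(x-Y)|\,]$ by the triangle inequality for the expectation. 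Integrating in $x$ and exchanging the order of integration yields $d(X_1+Y,X_2+Y)\le \E\bigl[\int_{-\infty}^{\infty}|F_1(x-Y)-F_2(x-Y)|\,dx\bigr]=\E[d(X_1,X_2)]=d(X_1,X_2)$, the penultimate step using translation invariance of the inner integral.

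The step I expect to be the main obstacle is the interchange of integration in part (4): it is justified by Tonelli's theorem because the integrand is nonnegative, and the finiteness bound from the first paragraph guarantees that the resulting double integral equals the finite quantity $d(X_1,X_2)$. With the corrected part (4) in hand, the four estimates chain together in the order (3), (2), (4), (1): applied to the recursion in Eq.~\eqref{eq:fin_buf_nonrec} for two copies driven by the same drift, they show that one step of the queue map contracts the Kantorovich--Rubinstein distance between any two initial laws by the factor $\bar\gamma<1$, which is the property Theorem~\ref{thm:convergence} relies on.
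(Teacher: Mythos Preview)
Your proposal is correct and follows essentially the same route as the paper: change of variables for (1), pointwise restriction of the integrand for (2) and (3), and convolution plus the triangle inequality and Fubini/Tonelli for (4). You are also right about the typo in part~(4)---the paper's own proof in fact establishes $d(X_1+Y,X_2+Y)\le d(X_1,X_2)$, exactly as you argue.
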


\begin{proof} 
For the first claim, we use that
the CDF of $\alpha X_i$ is
$$
\P(\alpha X_i\le x)=F_i(\alpha^{-1}x)\,,\quad i=1,2
$$
and compute
\begin{align*}
d(\gamma X_1,\gamma X_2) &= 
\int_{-\infty}^\infty
|F_1(\alpha^{-1}x)-F_2(\alpha^{-1} x)|\, dx\\
&=\alpha
\int_{-\infty}^\infty
|F_1(y)-F_2(y)|\, dy\\
&=\alpha \,d(X_1,X_2)\,.
\end{align*}
The next two claims are immediate 
from the facts that
the CDF of $[X_i]^+$
is $F_i \mathbbm{1}_{x>0}$,
and the CDF of
$\min\{X_i,C\}$ is
$F_i \mathbbm{1}_{x\le C}$. For the final claim,
let $\mu$ be the probability distribution
of $Y$. Then the CDF of $X_i+Y$ is 
$F_i*\mu(x)=\int F_i(x-y)\, d\mu(y)$ for $i=1,2$.
Therefore
\begin{align*}
d(X_1+Y,X_2+Y)
&=\int_{-\infty}^\infty
\left|\int_{-\infty}^\infty F_i(x-y)
-F_2(x-y)\,d\mu(y)\right|\, dx\\
&\le \int_{-\infty}^\infty
\int_{-\infty}^\infty |F_i(x-y) -F_2(x-y)|\,d\mu(y)\, dx\\
&=d(X_1,X_2)\,.
\end{align*}
We have used the triangle inequality, and then applied Fubini's theorem.
\end{proof}

\begin{proof}[Proof of Theorem~\ref{thm:convergence}]
Let $\Psi$ be the transformation that maps the distribution of
$B(n-1)$ to the distribution of $B(n)$ according
to Eq.~\eqref{eq:fin_buf_nonrec}. Explicitly,
\begin{equation}
\label{eq:def-Psi}
\Psi(X)=_{\mathcal{D}}
\min\bigl \{[\bar\gamma X+\delta]^+, C\bigr\}\,,
\end{equation}
where $\delta$ is independent of $X$.
By Lemma~\ref{lem:KR}, 
\begin{align*}
d\bigl(\Psi(X_1),\Psi(X_2))
\le d\bigl(\bar\gamma X_1, \bar\gamma X_2\bigr)
= \bar\gamma d(X_1,X_2)\,.
\end{align*}
By Banach's contraction mapping theorem,
$\Psi$ has a unique fixed point, which we denote by
$B$. Moreover, by induction,
$$
d(B(n),B)\le \bar\gamma^n d(B(0),B)\,,
$$
proving convergence to the steady state. 
\end{proof}
The last step of the proof allows us to strengthen 
 Theorem~\ref{thm:convergence}. 
\begin{corollary} 
The convergence of $B(n)$ to the steady state 
in Theorem~\ref{thm:convergence} occurs exponentially 
fast. 
\end{corollary}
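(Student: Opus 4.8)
The plan is to observe that the required estimate is already contained in the proof of Theorem~\ref{thm:convergence}, and then to check the one quantity that proof leaves implicit, namely that the multiplicative constant is finite.

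Recall that the proof of Theorem~\ref{thm:convergence} established, by induction from the contraction property of $\Psi$, the bound
\[
d(B(n),B)\le\bar\gamma^n\,d(B(0),B)\,,
\]
where $d$ is the Kantorovich--Rubinstein distance and $B$ denotes the steady-state distribution. Since $0<\bar\gamma<1$, this is precisely a geometric (hence exponential-in-$n$) decay of the distance between the law of $B(n)$ and its limit, as soon as $d(B(0),B)<\infty$. So the only work is to confirm finiteness of $d(B(0),B)$. For a queue of finite capacity $C$ this is immediate: both $B(0)$ and $B$ take values in $[0,C]$, so their CDFs agree (equal to $0$, resp.\ $1$) outside $[0,C)$, and therefore $d(B(0),B)=\int_0^C|F_1(x)-F_2(x)|\,dx\le C<\infty$. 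For the infinite-capacity extension one instead uses that, as shown in the stability subsection, $\P(B(n)>x)\le\P(Y>x)$ with $Y=\sum_{m\ge0}\bar\gamma^m[\delta(m)]^+$ of finite mean; this bound passes to the steady state, so $\E[B]<\infty$, and the integrand of $d(B(0),B)$ is then dominated on $(0,\infty)$ by $(1-F_{B(0)})+(1-F_B)$, whose integral is $\E[B(0)]+\E[B]<\infty$.

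It remains only to phrase ``exponentially fast'' in terms of the stored energy itself. Here I would add the elementary inequality $|\E[X_1]-\E[X_2]|\le d(X_1,X_2)$, valid for integrable $X_1,X_2$ and immediate from $\E[X]=\int_0^\infty(1-F(x))\,dx-\int_{-\infty}^0F(x)\,dx$. Combined with the displayed bound this yields $|\E[B(n)]-\E[B]|\le\bar\gamma^n\,d(B(0),B)$, so the mean stored energy converges to its steady-state value at the geometric rate $\bar\gamma$; analogous consequences for the CDFs follow directly from the definition of $d$. I expect no real obstacle in this argument --- all the substance already lies in Lemma~\ref{lem:KR} and the proof of Theorem~\ref{thm:convergence} --- and the only point deserving explicit mention is the finiteness of the constant $d(B(0),B)$, which reduces to the finite-mean property of the steady state recorded in Section~\ref{sec:stability}.
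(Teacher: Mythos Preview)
Your proposal is correct and follows exactly the paper's approach: the corollary is obtained directly from the contraction estimate $d(B(n),B)\le\bar\gamma^n d(B(0),B)$ established in the proof of Theorem~\ref{thm:convergence}. Your additional verification that $d(B(0),B)<\infty$ and the remark on convergence of means are sound elaborations that the paper leaves implicit.
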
 

The same argument, shows the convergence of a 
leakage queue of infinite capacity to its 
steady state. The only change is that
Eq.~\eqref{eq:def-Psi} should be
replaced by 
$\Psi(X)=_{\mathcal{D}} [\bar\gamma X+\delta]^+$.

The proof of the theorem also yields
an estimate for the distance of the steady state
of the stored energy from the steady state of
the reference system, given in Eq.~\eqref{eq:b_r_inf}.

\begin{corollary}
\label{coro:B-Br}
The steady state distributions of the stored energy
in the leakage queue and the reference
system satisfy
$$
d(B,B^r)\le \frac{1}{\gamma}\, 
\left(\int_{-\infty}^0 \P(B\le x)\, dx + \int_C^\infty \P(B>x)\, dx
\right)\,.
$$
\end{corollary}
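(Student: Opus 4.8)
The plan is to compare the two steady-state distributions through the recursions that define them, using the contraction estimates from Lemma~\ref{lem:KR}. Both $B$ and $B^r$ are fixed points: $B \eqd \min\{[\bar\gamma B + \delta]^+, C\}$ and $B^r \eqd \bar\gamma B^r + \delta$, with $\delta$ independent of the argument in each case. The idea is to introduce the intermediate random variable $\tilde B \eqd \bar\gamma B + \delta$, obtained by applying the \emph{reference} dynamics to the leakage-queue steady state. Then, by the triangle inequality for the Kantorovich-Rubinstein distance,
\begin{align*}
d(B, B^r) = d\bigl(\min\{[\tilde B]^+, C\}, \bar\gamma B^r + \delta\bigr)
\le d\bigl(\min\{[\tilde B]^+, C\}, \tilde B\bigr) + d\bigl(\tilde B, \bar\gamma B^r + \delta\bigr)\,.
\end{align*}

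For the second term, claim~(4) of Lemma~\ref{lem:KR} (with the independent summand $\delta$) followed by claim~(1) gives $d(\tilde B, \bar\gamma B^r + \delta) = d(\bar\gamma B + \delta, \bar\gamma B^r + \delta) \le d(\bar\gamma B, \bar\gamma B^r) = \bar\gamma\, d(B, B^r)$. Substituting back and moving this term to the left-hand side yields $(1 - \bar\gamma)\, d(B, B^r) \le d(\min\{[\tilde B]^+, C\}, \tilde B)$, i.e.\ $d(B, B^r) \le \tfrac{1}{\gamma}\, d(\min\{[\tilde B]^+, C\}, \tilde B)$. It remains to bound the right-hand side by the two boundary integrals. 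Since the CDF of $\min\{[\tilde B]^+, C\}$ equals $0$ for $x < 0$, equals $\P(\tilde B \le x)$ for $0 \le x < C$, and equals $1$ for $x \ge C$, the integrand $|F_{\min\{[\tilde B]^+, C\}}(x) - F_{\tilde B}(x)|$ is $\P(\tilde B \le 0)\le\P(\tilde B \le x)$ on $(-\infty, 0)$ — wait, more carefully it is $\P(\tilde B < 0) $-ish on the negative axis and $\P(\tilde B > C)$ on $(C,\infty)$ — so the integral is exactly $\int_{-\infty}^0 \P(\tilde B \le x)\, dx + \int_C^\infty \P(\tilde B > x)\, dx$. Finally I observe that $\min\{[\tilde B]^+, C\} \eqd B$, so $\P(\tilde B \le x) = \P(B \le x)$ for $x \le 0$ wherever both sides are $\P(\tilde B\le x)$; the clean statement is that $\tilde B$ and $B$ agree in distribution on $[0, C]$ and $B$ is the truncation of $[\tilde B]^+$, which lets me replace $\tilde B$ by $B$ in the two integrals (on $(-\infty,0]$, $\P(\tilde B\le x)\le \P(B\le x)$ since $B\ge 0$ forces $\P(B\le x)$ to behave at the boundary; on $[C,\infty)$, $\P(\tilde B > x)\ge \P(B>x)$... ). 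The cleanest route is to keep the bound in terms of $\tilde B$ and then note $\P(\tilde B \le x) \le \P(B \le x)$ is false in general — so instead I will argue that since $B = \min\{[\tilde B]^+,C\}$ in distribution, $\P(B \le x) = 1$ for $x\ge C$ and $\ge \P(\tilde B\le x)$ for $x\in[0,C)$, while for the negative part $\int_{-\infty}^0 \P(\tilde B\le x)dx$ can be rewritten using $\P(\tilde B \le x) = \P(B\le x)$...

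The main obstacle, then, is precisely this last bookkeeping step: relating the distribution of the \emph{untruncated} $\tilde B = \bar\gamma B + \delta$ to that of $B$ itself in the two integrals. The resolution is that applying the reference map to $B$ and then truncating recovers $B$, i.e.\ $B \eqd \min\{[\tilde B]^+, C\}$; hence for $x \in [0,C)$ we have $\P(B \le x) = \P(\tilde B \le x)$ exactly (both equal $\P([\tilde B]^+ \le x) = \P(\tilde B \le x)$ there, since $x\ge 0$), for $x \ge C$ we have $\P(B\le x) = 1 \ge \mathbbm{1}_{x\ge C}$, and for $x < 0$ we have $\P(\tilde B \le x) \le \P([\tilde B]^+ \le x\,|\,\ldots)$ — no: for $x<0$, $\P([\tilde B]^+ \le x) = 0$ whereas $\P(\tilde B \le x)$ may be positive, and that positive mass is exactly the "underflow" contribution. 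So $d(\min\{[\tilde B]^+,C\}, \tilde B) = \int_{-\infty}^0 \P(\tilde B \le x)\,dx + \int_C^\infty \P(\tilde B > x)\,dx$, and since on $[0,\infty)$ the variable $[\tilde B]^+$ agrees with $\tilde B$ while the further truncation at $C$ only affects $x\ge C$, I can push the comparison through to $B$: on $(-\infty,0)$, $\P(\tilde B\le x) = \P(B\le x) + (\text{mass of }[\tilde B]^+\text{ at }0\text{ that }B\text{ also has at }0)$, and the stated inequality uses $\P(B\le x)\ge$ the relevant quantity. I will write this out carefully, but the upshot is the claimed bound
\begin{align*}
d(B,B^r) \le \frac{1}{\gamma}\left(\int_{-\infty}^0 \P(B \le x)\, dx + \int_C^\infty \P(B > x)\, dx\right)\,,
\end{align*}
where I use that $B$ and $\tilde B$ have the same distribution restricted to the "interior" regime and that the boundary truncations can only increase these tail integrals when passing from $\tilde B$ to $B$. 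If the direction of the last inequality turns out to go the wrong way, the fallback is to state the corollary with $\tilde B$ in place of $B$ on the right-hand side, which is still a legitimate and useful bound; but I expect the truncation structure ($B$ is $[\tilde B]^+$ capped at $C$, so $B$'s lower tail is "fatter" at $0$ and its upper tail is "thinner" past $C$) to make the inequality work as stated.
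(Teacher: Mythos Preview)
Your contraction argument is sound up to the point where you obtain
\[
d(B,B^r)\le \frac{1}{\gamma}\left(\int_{-\infty}^0\P(\tilde B\le x)\,dx+\int_C^\infty\P(\tilde B>x)\,dx\right),
\]
with $\tilde B=\bar\gamma B+\delta$. The trouble you run into in the last step is real and not repairable: the steady state $B$ of the leakage queue lies almost surely in $[0,C]$, so both $\int_{-\infty}^0\P(B\le x)\,dx$ and $\int_C^\infty\P(B>x)\,dx$ vanish identically. Taken literally, the printed inequality would force $d(B,B^r)=0$, which is false. The corollary contains a typo; the intended right-hand side has $B^r$ in place of $B$ (the paper's own proof carries the same typo, writing $B$ in two places where the surrounding reasoning---``because $\bar\gamma B^r+\delta=_{\mathcal D}B^r$''---only makes sense for $B^r$).

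The paper's argument is the mirror image of yours. You use that $B$ is the fixed point of $\Psi$ and measure its defect under the \emph{reference} map $X\mapsto\bar\gamma X+\delta$. The paper instead initializes the leakage recursion at $B(0)=_{\mathcal D}B^r$ and measures the defect of $B^r$ under~$\Psi$: since $\bar\gamma B^r+\delta=_{\mathcal D}B^r$, one has $\Psi(B^r)=_{\mathcal D}\min\{[B^r]^+,C\}$, and the standard Banach estimate $d(\text{fixed point},X)\le\gamma^{-1}d(\Psi(X),X)$ gives
\[
d(B,B^r)\le\frac{1}{\gamma}\,d\bigl(\Psi(B^r),B^r\bigr)=\frac{1}{\gamma}\left(\int_{-\infty}^0\P(B^r\le x)\,dx+\int_C^\infty\P(B^r>x)\,dx\right).
\]
Both routes are the same fixed-point principle applied in opposite directions, and each yields a valid bound. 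The paper's version is the more useful one because the distribution of $B^r$ is explicit (and approximately Gaussian by Theorem~\ref{thm:gaussian}), whereas your $\tilde B=\bar\gamma B+\delta$ still depends on the unknown~$B$. Your fallback---keeping $\tilde B$ on the right---is correct but less informative; the cleaner fix is simply to swap the roles of $B$ and $B^r$ in your triangle-inequality step.
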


\begin{proof} By Theorem~\ref{thm:convergence},
the stored energy $B(n)$ in the leakage queue converges to
the steady state regardless of the choice of the initial
condition $B(0)$.  Let us use the
steady state of the reference system as an
initial condition for $B(n)$, that is, 
$B(0)=_{\mathcal{D}} B^r$. 
It is a consequence of
the contraction mapping theorem that
the distance to the steady state 
is bounded by
$$
d(B,B(n))\le \frac{\bar\gamma^n}{\gamma}\, d(B(1),B(0))\,.
$$
We set $n=0$ and proceed to estimate
$d(B(1),B(0))$. Due to our choice of the initial state, 
Eq.~\eqref{eq:fin_buf_nonrec} yields
$$
B(1)=_{\mathcal{D}} \Psi(B) = \min\bigl\{[B]^+,C\bigr\}\,,
$$
because $\bar\gamma B^r +\delta=_{\mathcal{D}}B^r$
by the definition of the reference system.
Therefore
$$
d(B(1),B(0)) =
\int_{-\infty}^0 \P(B\le x)\, dx + \int_C^\infty \P(B>x)\, dx\,,
$$
completing the proof.
\end{proof}

\section{Probabilistic Bounds}
\label{sec:bounds}

In this section, we quantify the underflow and overflow probabilities 
$\P (l_u > 0)$ and $\P (l_o > 0)$ at the leakage queue. 
Based on Sec.~\ref{sec:dynamics}, we expect the 
reference system to provide a good approximation for the leakage queue 
in the capacity-dominated regime where the storage capacity 
is large enough to absorb random variations of power supply and demand.
For the capacity-dominated regime, we offer a separate martingale analysis.  
As in Sec.~\ref{sec:stability}, we assume that the 
net charge $\delta(n)$ is an i.i.d. process with 
finite mean. 
We now additionally assume that $\delta(n)$ has finite variance.

\subsection{Gaussian analysis}\label{subsec:bounds-gaussian}

In Sec.~\ref{sec:stability} we showed that the 
stored energy in the reference system $B^r(n)$ has a unique steady state,
given by Eq.~\eqref{eq:b_r_inf}. 
In the special case where $\delta(k)$ follows a normal distribution,
the reference system $B^r$ is also Gaussian, with mean and variance 
in the steady state given by
\begin{align}
 \E[B^r]=\frac{\E[\delta]}{\gamma}, \qquad
 \Var[B^r]=\frac{\Var [\delta]}{1-\bar\gamma^2} \, . 
\label{eq:gauss_prob_1}
\end{align}
Let $B$ be the steady state of the
corresponding leakage queue.
If $C > E[B^r]$, we expect
the stored energy $B$ to be well-approximated by $B^r$,
see Corollary~\ref{coro:B-Br}.
In particular, underflow and overflow probabilities 
should be small, and satisfy
\begin{align}
\label{eq:gauss_approx-lu} 
\P (l_u > 0) & \approx 
\P(B^r<0) =
\Phi \left(\frac{\E[B^r]}{\sqrt{\Var [B^r]}} \right) \, , \\
\label{eq:gauss_approx-lo} 
\P (l_o > 0) &\approx 
\P(B^r>C) = 
\Phi \left(\frac{C-\E[B^r]}{\sqrt{\Var [B^r]}} \right) \,   ,
\end{align}
where $\Phi$ is the standard normal CDF.

We evaluate the accuracy of this approximation for a leakage queue with 
Gaussian net charge, by comparing  
Eqs.~\eqref{eq:gauss_approx-lu} and~\eqref{eq:gauss_approx-lo} 
with simulations of the leakage queue. 
For the simulations, we compute averages over 
multiple repetitions of long simulation runs.
We consider a storage system with a size up to 50~kWh, which covers a 
reasonable range for residential energy storage systems~\cite{tesla}.  
We assume a self-discharge of 20\% per day ($\gamma = 0.0093$)
or 50\% per day ($\gamma = 0.0285$), and also consider a system without self-discharge 
($\gamma = 0$).
The energy supply and demand in a time slot of 
one hour are set so that $\E[a]=1$~kWh and $\E[s]=0.8$~kWh, respectively. 
The standard deviation  is set to $\sigma_\delta = 1$~kWh.

Fig.~\ref{fig:P_underflow_gauss_leakdom} depicts the underflow probability  
computed from Eq.~\eqref{eq:gauss_approx-lu} as a function of the 
storage capacity, and compares it with simulations. 
Since the expression for $\P(B^r<0)$ 
in Eq.~\eqref{eq:gauss_approx-lu} does not depend on $C$, 
the analysis yields a straight line. 
The simulations of a system without leakage 
show that the underflow probability decreases exponentially 
in $C$. 
For systems with leakage,  
on the other hand, 
the underflow probability becomes eventually constant 
in the leakage-dominated region ($C > \E[B^r]$). 
Hence, increasing the storage capacity 
further will not reduce the underflow probability.

\begin{figure}[t]
  \centering
	\subfigure[Energy Loss (Underflow).]{
 	\includegraphics[width=5in]{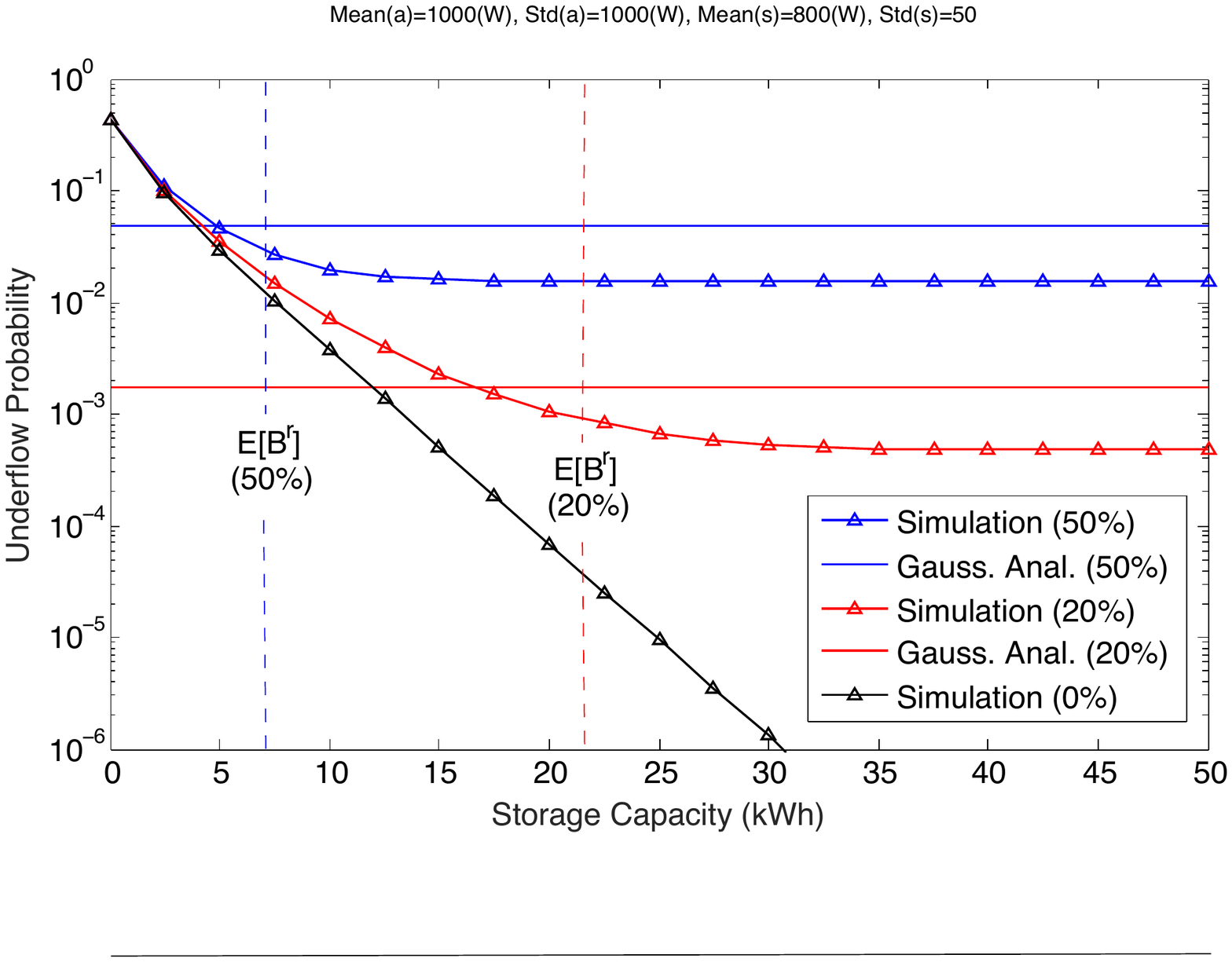}
	\label{fig:P_underflow_gauss_leakdom}
	}

\centering
	\subfigure[Energy Waste (Overflow).]{
 	\includegraphics[width=5in]{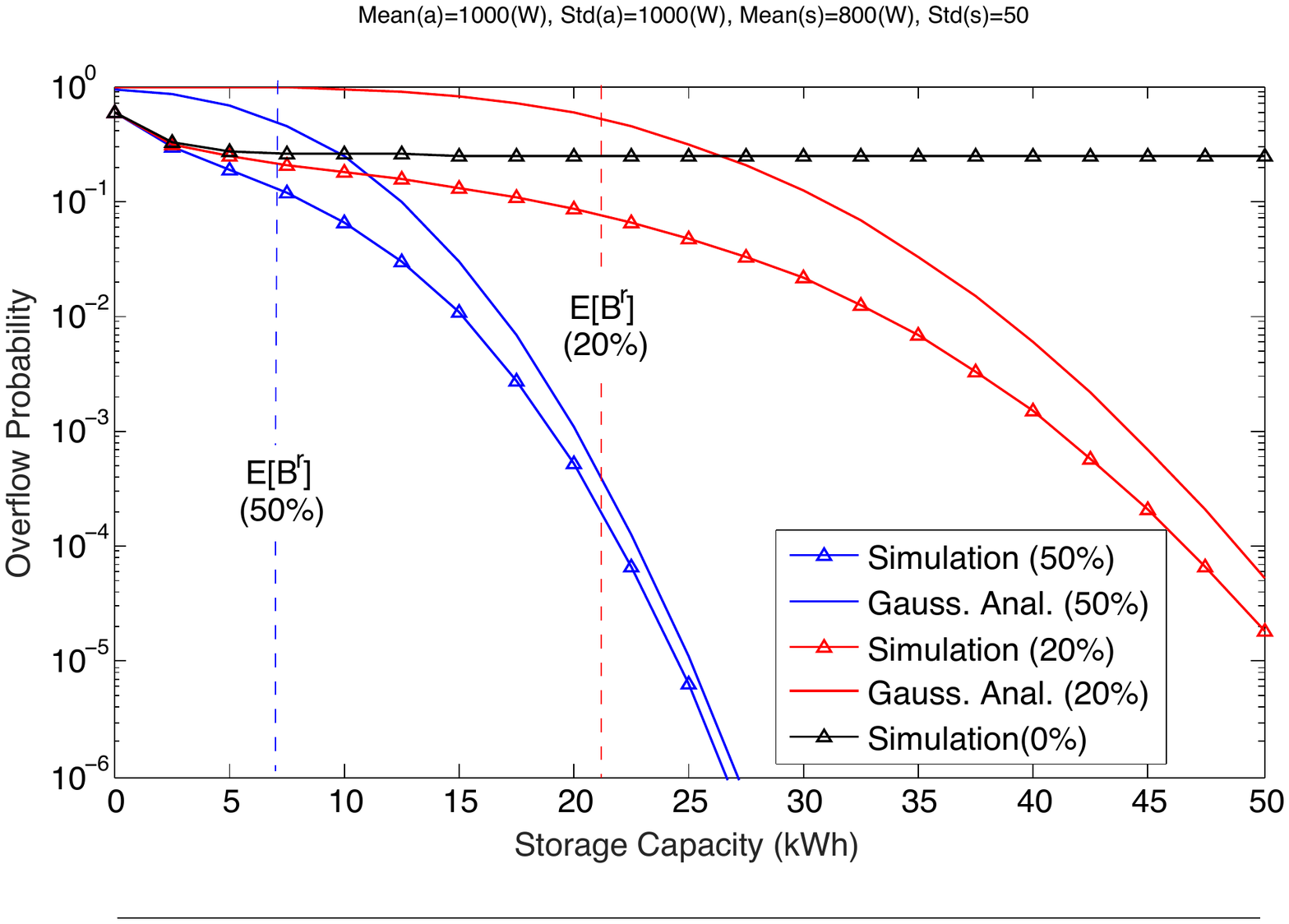}
	\label{fig:P_overflow_gauss_leakdom}
	}
  \caption{Underflow and overflow with Gaussian net charge 
{\rm   ($\E[\delta]=200$~Wh, $\sigma_{\delta}=1$~kWh, 
self-discharge per day: 0\%, 20\%, or~50\%)}. }
  \label{fig:gauss_leakdom}
\end{figure}

\clearpage 
In Fig.~\ref{fig:P_overflow_gauss_leakdom} we consider the 
wasted energy due to overflows.
For the system without leakage, the overflow probability 
quickly settles at a value 
that does not depend on the storage capacity. Here, the 
storage is mostly full and 
the overflow compensates for the excess 
supply compared to the demand. 
Leakage queues with non-zero leakage show a dramatically different behavior. 
The analytical estimate for $\P (l_o > 0)$ from Eq.~\eqref{eq:gauss_approx-lo} decreases 
faster than exponentially in $C$, which is also reflected in 
the simulations.  
We conclude that in leakage queues the overflow probability can be reduced 
arbitrarily by increasing the storage capacity.

Both plots in Fig.~\ref{fig:gauss_leakdom} show that the Gaussian analysis 
can provide good estimates of the underflow and overflow probabilities when 
the system is in the leakage-dominated regime ($C > \E[B^r]$).  
Even if $\delta(n)$ does not follow a normal distribution, 
the Gaussian analysis provides good estimates 
(see Subsec.~\ref{subsec:wind}). 
To understand why this is the case, let us 
consider general supply and demand processes that are i.i.d. with 
arbitrary distributions.
By Eq.~\eqref{eq:b_r_inf}, the 
stored energy $B^r$ in the steady state
of the reference system
is the sum of independent random variables $\delta(m)\bar\gamma^m$.
In the limit $\gamma\to 0$, these random variables become i.i.d.
In analogy with the Central Limit Theorem, 
one should expect that the distribution of $B^r$ 
approaches a Gaussian. The following result states that
a suitably normalized version of
$B^r$ converges to the normal distribution.

\begin{theorem}
Let $\delta(n)$ be a sequence of i.i.d. random variables with finite
mean and variance. For $0<\gamma<1$,
let $B^r$ be given by Eq.~\eqref{eq:b_r_inf}. Then, as $\gamma \to 0$, 
$$
Z:= \frac{B^r -\E[B^r]}{\sqrt{\Var [B^r]}}
$$
converges in distribution to a standard normal random variable.
\label{thm:gaussian}
\end{theorem}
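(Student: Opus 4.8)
The plan is to recognize $Z$ as the suitably normalized sum of a triangular array of independent summands whose individual variances all vanish relative to the total, and then to invoke the Lindeberg--Feller central limit theorem. Since the hypothesis only grants $\delta$ a finite variance and no higher moment, Lindeberg's condition, rather than Lyapunov's, is the appropriate tool. Write $\mu=\E[\delta]$, $\sigma^2=\Var[\delta]$, and recall from Eqs.~\eqref{eq:b_r_inf} and~\eqref{eq:gauss_prob_1} that $B^r=\sum_{m=0}^\infty \delta(m)\bar\gamma^m$, with $\E[B^r]=\mu/\gamma$ and $s_\gamma^2:=\Var[B^r]=\sigma^2/(1-\bar\gamma^2)$. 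The essential point, which is exactly what produces the central limit behaviour, is that $s_\gamma\to\infty$ as $\gamma\to0$, so that no single term $\delta(m)\bar\gamma^m$ accounts for a nonvanishing fraction of the variance. For fixed $\gamma\in(0,1)$ I would set $X_m := (\delta(m)-\mu)\bar\gamma^m/s_\gamma$ for $m\ge0$, so that the $X_m$ are independent and centered, $\sum_{m\ge0}\Var[X_m]=1$, and $Z=\sum_{m\ge0}X_m$.

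The first step is to verify Lindeberg's condition: for every $\eps>0$,
\[
L_\gamma(\eps):=\sum_{m=0}^\infty \E\bigl[X_m^2\,\mathbbm{1}_{\{|X_m|>\eps\}}\bigr]\ \longrightarrow\ 0\qquad\text{as }\gamma\to0.
\]
Because $\bar\gamma^m\le1$, the event $\{|X_m|>\eps\}$ is contained in $\{|\delta(m)-\mu|>\eps s_\gamma\}$; since the $\delta(m)$ are identically distributed, $\E[(\delta(m)-\mu)^2\mathbbm{1}_{\{|\delta(m)-\mu|>\eps s_\gamma\}}]=g(\eps s_\gamma)$ for the tail function $g(t):=\E[(\delta-\mu)^2\mathbbm{1}_{\{|\delta-\mu|>t\}}]$, which does not depend on $m$. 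Combining this with the identity $\sum_{m\ge0}\bar\gamma^{2m}/s_\gamma^2=1/\sigma^2$ yields $L_\gamma(\eps)\le g(\eps s_\gamma)/\sigma^2$. Finiteness of the variance gives $g(t)\to0$ as $t\to\infty$ by dominated convergence, and $\eps s_\gamma\to\infty$; hence $L_\gamma(\eps)\to0$, as required.

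The second step is to turn this into a distributional limit despite the fact that the array $\{X_m\}_{m\ge0}$ has infinitely many rows for each fixed $\gamma$. I would truncate: choose $N=N(\gamma)\to\infty$ with $\bar\gamma^{2N(\gamma)}\to0$ (for instance $N(\gamma)=\lceil\gamma^{-2}\rceil$), and split $Z=Z_N+R_N$ with $Z_N=\sum_{m=0}^{N}X_m$ and $R_N=\sum_{m>N}X_m$. Then $\Var[R_N]=\bar\gamma^{2(N+1)}\to0$, so $R_N\to0$ in probability by Chebyshev's inequality. The finite array $\{X_m:0\le m\le N(\gamma)\}$ has total variance $1-\bar\gamma^{2(N+1)}\to1$ and satisfies Lindeberg's condition (its Lindeberg sum is at most $L_\gamma(\eps)$), so the classical Lindeberg--Feller theorem gives $Z_N\Rightarrow\mathcal{N}(0,1)$; Slutsky's theorem then yields $Z=Z_N+R_N\Rightarrow\mathcal{N}(0,1)$, which is the claim.

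The only genuinely delicate point is the passage from a convergent infinite series of independent terms to a bona fide triangular array with finite rows, handled above by truncation; the remaining verifications are routine. An alternative route that sidesteps the truncation is to work directly with the characteristic function $\phi_Z(t)=\prod_{m\ge0}\E[e^{\mathrm{i}tX_m}]$, expand each factor to second order, and control the aggregate remainder using the bound on $L_\gamma(\eps)$ established above -- but this effectively reproves Lindeberg--Feller and is less transparent, so I would favour the argument sketched here.
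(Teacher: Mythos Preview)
Your argument is correct. The verification of Lindeberg's condition via the uniform tail bound $L_\gamma(\eps)\le g(\eps s_\gamma)/\sigma^2$ is clean, and the truncation device to reduce to a finite triangular array is handled properly; the choice $N(\gamma)=\lceil\gamma^{-2}\rceil$ does give $\bar\gamma^{2N(\gamma)}\to 0$, and Slutsky closes the gap.

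The paper takes the route you mention at the end and dismiss: it works directly with the characteristic function, writing
\[
\E[e^{i\theta Z}]=\prod_{k=0}^\infty \mathcal{X}\bigl(\theta\sqrt{1-\bar\gamma^2}\,\bar\gamma^k\bigr),
\]
where $\mathcal{X}$ is the characteristic function of $(\delta-\E[\delta])/\sqrt{\Var[\delta]}$, and then uses the second-order Taylor expansion of $\mathcal{X}$ at the origin (available because $\Var[\delta]<\infty$) together with a standard product estimate to obtain convergence to $e^{-\theta^2/2}$. That proof is shorter and avoids both the truncation step and the appeal to an off-the-shelf CLT. Your approach, by contrast, is more structural: it makes transparent \emph{why} the Gaussian limit emerges---namely, the maximal individual variance $\sigma^2/s_\gamma^2=1-\bar\gamma^2$ vanishes---and packages the analytic work into a citation of Lindeberg--Feller. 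Both arguments rely on exactly the same moment hypothesis, so neither is more general; the trade-off is brevity (the paper) versus conceptual framing (yours). Your remark that the characteristic-function route ``effectively reproves Lindeberg--Feller'' is a fair assessment of why the two are equivalent in spirit.
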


\begin{proof} It suffices to show that the characteristic function
$\E[e^{i\theta Z}]$ converges to  $e^{-\theta^2/2}$,
the characteristic function of the standard normal
distribution, for every $\theta\in\RR$ \cite[Theorem 3.3.6]{Durrett}.
Let $\mathcal{X} (\theta)$ be the characteristic
function of the normalized random variable
$\frac{\delta-\E[\delta]}{\sqrt{\Var[\delta]}}$.
By the i.i.d. assumption,
$$
\E[e^{i\theta  Z}]
= \prod_{k=0}^\infty \mathcal{X} 
\left( \theta\sqrt{1-\bar\gamma^2}\bar\gamma^k\right)\,.
$$
Since $\delta$ has finite variance,
$\mathcal{X}$ is twice differentiable at zero,
with $\mathcal{X}(0)=1$, $\mathcal{X}'(0)=0$, and $\mathcal{X}''(0)=1$. 
Using the Taylor expansion of $\mathcal{X}$ about zero,
a routine estimate for the product
(see, for example,~\cite[Exercise 3.1.1]{Durrett}) 
shows that $\E[e^{i\theta  Z}] \to e^{-\theta^2/2}$
as $\gamma\to 0$.
\end{proof}

The theorem, in combination with Corollary~\ref{coro:B-Br},
justifies Eq.~\eqref{eq:gauss_approx-lu} and
Eq.~\eqref{eq:gauss_approx-lo} for general i.i.d. arrival 
and service processes of finite mean and variance,
so long as $\gamma$ is sufficiently small 
and $C > \E[B^r]$.
If $\gamma$ is not close to zero, additional moments
of $B^r$ may be used to approximate its distribution.
Since $B^r$ is a sum of independent random variables,
each of its cumulants can be computed directly from the corresponding
cumulant of $\delta$, using Eq.~\eqref{eq:b_r_inf}. 
In particular, the skewness of $B^r$ is given by
 \begin{align}
\text{Skew}[B^r] & =\frac{(1-\bar\gamma^2)^{3/2}}{1-\bar\gamma^3}
\text{Skew}\left[\delta\right]\,.
\label{eq:skew-normal}
 \end{align}
For non-zero skewness, a skew-normal distribution,
fitted to the mean, variance, and skewness of $B^r$,
provides a better approximation to
$\P(B^r>C)$ and $\P(B^r<0)$ than a Gaussian~\cite{skew-normal}.
This approximation can
be inserted in place of the Gaussian on the right
hand sides of Eq.~\eqref{eq:gauss_approx-lu} and
Eq.~\eqref{eq:gauss_approx-lo}.  
We will illustrate the benefits of using the skew normal approximation
in Subsec.~\ref{subsec:wind}.

\subsection{Martingale analysis}\label{subsec:bounds-martingale}

We have seen that the Gaussian analysis from the 
previous subsection provides a good level 
of accuracy in the leakage-dominated regime, but much less so 
in the capacity-dominated regime. 
Next, we present a martingale analysis 
for estimating the underflow probability in the capacity-dominated regime.
Note that, in the capacity-dominated regime, the overflow probability is of less 
interest since the stored energy is mostly close to or at 
capacity.

Recall that at $\E[\delta]>0$,  a leakage queue in the capacity-dominated 
behaves similarly to a finite-capacity queueing system.
In the queueing 
literature, the overflow probability  of finite-capacity queues has been 
studied extensively.  We take advantage of the available methods 
for buffer overflow by resorting to 
the dual system presented in Subsec.~\ref{subsec:duality} 
and applying the overflow analysis in \cite{kobayashi}, 
which is based on the  Kingman-Ross delay 
bounds~\cite{kingman,ross-martingale}. 

\begin{theorem}
Consider a leakage queue with a net charge 
given by an i.i.d. process
$\delta(n)$ and leakage ratio $0<\gamma<1$. 
Assume that the moment-generating function
$M(\theta)=\E[e^{\theta \delta}]$ exists
for at least some $\theta<0$,  and let
$$
\theta^*=\sup\{\theta >0: \E[e^{\theta^*(\gamma C -\delta)}]\le 1\}\,.
$$
If $C < E[B^r]$, then $\theta^*>0$ and
the underflow probability in the steady 
state is  bounded  by 
\begin{equation}
\P (l_u > 0 ) \leq e^{-\theta^*C}\,.
\label{eq:regime1_underflow}
\end{equation}
If, moreover, $a(n)$ and $s(n)$ are independent processes, then
\begin{equation}
\P (l_u > 0 ) \leq \frac{e^{-\theta^*C}}{
\inf_{x} \E[e^{\theta^*(s-x)}\vert
s >x ]}\,.
\label{eq:regime1_underflow-imp}
\end{equation}
\label{thm:martingale}
\end{theorem}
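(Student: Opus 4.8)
The plan is to convert the underflow probability into an overflow probability in the dual system, to note that in the capacity-dominated regime the dual has strictly negative drift, to dominate its overflow probability by that of an ordinary leakage-free queue with the same net charge, and finally to apply the Kingman--Ross martingale bound of~\cite{kobayashi}. First I would set $\delta'(n)=a'(n)-s'(n)=\gamma C-\delta(n)$, which is again i.i.d.; initializing the original queue full so that $B'(0)=0$ by Lemma~\ref{lem:duality}, we have $l_u(n)=l'_o(n)$ for every $n$, hence $\P(l_u>0)=\P(l'_o>0)$ after passing to the steady state via Theorem~\ref{thm:convergence}. Since $\E[\delta']=\gamma C-\E[\delta]$, the dual has strictly negative drift precisely when $C<\E[\delta]/\gamma=\E[B^r]$. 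The assumption that $M(\theta)=\E[e^{\theta\delta}]$ is finite for some $\theta<0$ makes the map $\theta\mapsto\E[e^{\theta\delta'}]=e^{\theta\gamma C}M(-\theta)$ finite on an interval $[0,\theta_0]$ with $\theta_0>0$; it is convex, equals $1$ at $\theta=0$, and has derivative $\E[\delta']<0$ there, so it dips below $1$ for small $\theta>0$, which gives $\theta^*=\sup\{\theta>0:\E[e^{\theta\delta'}]\le 1\}>0$ and, by convexity, $\E[e^{\theta\delta'}]\le 1$ throughout $[0,\theta^*]$.

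Next I would compare the dual with the leakage-free queue $\tilde{B}'(n)=[\tilde{B}'(n-1)+\delta'(n)]^+$, $\tilde{B}'(0)=0$, driven by the same increments. Since $\bar\gamma x\le x$ for $x\ge 0$, induction gives $\bar\gamma B'(n-1)\le B'(n-1)\le\tilde{B}'(n-1)$, so the event $l'_o(n)>0$, i.e.\ $\bar\gamma B'(n-1)+\delta'(n)>C$, forces $\tilde{B}'(n-1)+\delta'(n)>C$ and hence $\tilde{B}'(n)>C$; therefore $\P(l'_o(n)>0)\le\P(\tilde{B}'(n)>C)$ for all $n$. Using the standard identity $\tilde{B}'(n)=_{\mathcal{D}}\max_{0\le k\le n}\tilde{V}_k$ for the random walk $\tilde{V}_m=\sum_{k=1}^m\delta'(k)$, $\tilde{V}_0=0$, and observing that $e^{\theta^*\tilde{V}_m}$ is a nonnegative supermartingale (because $\E[e^{\theta^*\delta'}]\le 1$) with $\E[e^{\theta^*\tilde{V}_0}]=1$, optional stopping at $\tau=\inf\{m\ge 1:\tilde{V}_m>C\}$ gives $e^{\theta^*C}\,\P(\tau<\infty)\le 1$. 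Since $\{\max_{0\le k\le n}\tilde{V}_k>C\}\subseteq\{\tau<\infty\}$, we conclude (in the steady state) $\P(l_u>0)\le e^{-\theta^*C}$, which is Eq.~\eqref{eq:regime1_underflow}.

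For the improved bound I would sharpen the last step by tracking the overshoot at $\tau$. Under the extra hypothesis that $a(n)$ and $s(n)$ are independent, write $\delta'(k)=s(k)-(a(k)-\gamma C)$; then on $\{\tau=m\}$ one has $\tilde{V}_{m-1}\le C$ and $s(m)>u_m$, where $u_m=C-\tilde{V}_{m-1}+a(m)-\gamma C$ is measurable with respect to $\mathcal{G}_m=\sigma(a(1),\dots,a(m),\,s(1),\dots,s(m-1))$ while $s(m)$ is independent of $\mathcal{G}_m$. Since $e^{\theta^*\tilde{V}_m}=e^{\theta^*C}\,e^{\theta^*(s(m)-u_m)}$ on this event, conditioning on $\mathcal{G}_m$ factors the exponential moment as $e^{\theta^*C}\,\E[\mathbbm{1}_{\tau\ge m}\,g(u_m)]$ with $g(u)=\E[e^{\theta^*(s-u)}\mathbbm{1}_{s>u}]=\E[e^{\theta^*(s-u)}\mid s>u]\,\P(s>u)$; bounding the conditional factor below by $\beta=\inf_x\E[e^{\theta^*(s-x)}\mid s>x]$ and noting that $\P(s>u_m)=\P(\tilde{V}_m>C\mid\mathcal{G}_m)$ telescopes back to $\P(\tau=m)$ when summed over $m$, I would obtain $1\ge\E[e^{\theta^*\tilde{V}_\tau}\mathbbm{1}_{\tau<\infty}]\ge e^{\theta^*C}\,\beta\,\P(\tau<\infty)$, hence $\P(l_u>0)\le e^{-\theta^*C}/\beta$, which is Eq.~\eqref{eq:regime1_underflow-imp}.

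The step I expect to be the main obstacle is this overshoot refinement: one has to choose the correct summand of $\delta'$ to expose at the crossing time, arrange $\mathcal{G}_m$ so that the exposed term $s(m)$ is conditionally independent both of the running state $\tilde{V}_{m-1}$ and of the complementary term $a(m)$, and then verify that the residual probability $\P(s>u_m)$ really does reassemble $\P(\tau=m)$ upon summing over the crossing step. The remaining ingredients --- the duality bookkeeping, the sign of $\E[\delta']$, the elementary convexity facts about $\theta\mapsto\E[e^{\theta\delta'}]$, and the pathwise domination by the leakage-free queue --- are routine. (One can alternatively run the martingale argument directly on the infinite-capacity leakage dual, using that $\E[e^{\theta^*\bar\gamma^{\,m-1}\delta'}]\le 1$ for every $m$ makes $\exp\{\theta^*\sum_{k\le m}\bar\gamma^{\,k-1}\delta'(k)\}$ a supermartingale; the domination route is preferable mainly because it keeps the overshoot computation time-homogeneous and makes the appeal to~\cite{kobayashi} transparent.)
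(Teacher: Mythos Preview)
Your proof is correct. Both you and the paper pass to the dual system, verify $\E[\delta']<0$ so that $\theta^*>0$, and then run a supermartingale/optional-stopping argument culminating in Ross's overshoot refinement. The difference is in which process carries the martingale. The paper works directly with the infinite-capacity leakage dual: from Theorem~\ref{thm:backlog} it bounds $B'(n)$ by $\max_{m}\sum_{k=0}^{m-1}\delta'(k)\bar\gamma^{k}$, notes that $z(m)=\exp\{\theta^*\sum_{k\le m}\delta'(k)\bar\gamma^{k}\}$ is a supermartingale (since $\E[e^{\theta^*\bar\gamma^{k}\delta'}]\le 1$ for every $k$), and applies Doob's inequality and then optional stopping to this time-inhomogeneous object. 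You instead insert a pathwise domination $B'(n)\le\tilde B'(n)$ by the leakage-free queue, which reduces everything to the classical unweighted random walk and the textbook Kingman--Ross bound of~\cite{kobayashi,ross-martingale}; this is exactly the alternative you flag in your closing remark. Your route buys a cleaner, time-homogeneous overshoot computation (your threshold $u_m$ carries no stray $\bar\gamma^{t_k}$ factor), at the cost of the extra domination step and of having to invoke convergence of the leakage-free queue separately. The paper's route avoids the auxiliary process but must justify the overshoot inequality for the weighted sum, which it does rather tersely. Both approaches yield the same bounds.
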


We will use the theorem in the special case where
$s(n)=s_0 + s_1(n)$, where $s_0$ is a constant 
and $s_1(n)$ is exponentially distributed.
In that case, the denominator of \eqref{eq:regime1_underflow-imp}
can be evaluated exactly, resulting in 
\begin{align*}
\P (l_u > 0 ) \leq e^{-\theta^*C}(1-\theta^*\E[s_1])\,.
\end{align*}
The following proof extends a recent martingale analysis of a fork-join 
system~\cite{Ciucu-forkjoin} to a leakage queue. 
\begin{proof}
Let $B'(n)$ be the dual system introduced
in Subsec.~\ref{subsec:duality}.
Let $\delta'(n)$ be the net charge in the dual system, 
as defined in Subsec.~\ref{subsec:duality}, 
and let $M'(\theta)=\E[e^{\theta\delta'}]$
be its moment-generating function.
Since $M'$ is a convex function, 
the set $\{\theta\in\RR\vert M'(\theta)\le 1\}$
is an interval containing zero. 
By definition, $\theta^*$ is the right endpoint of that interval.
The assumption that $\E[B^r]>C$ implies,
by Eq.~\eqref{eq:gauss_prob_1}, that 
$$
\E[\delta']=\gamma C -\E[\delta]<0\,.
$$
It follows that $\theta^*>0$.

Suppose the dual system is started with $B'(0)=0$.  
By Theorem~\ref{thm:backlog}, we have for every $n>0$,
\begin{align*}
B'(n) &=\min_{0\le m\le n} \bigl\{
\max_{m\le j\le n} \{
C \bar\gamma^{n-m}\mathbbm{1}_{j=m>0}
+ \Delta'_\gamma(j,n)\}\bigr\}\\
&\le \max_{0\le j\le n} \{\Delta'_\gamma(j,n)\}\\
&=_{\mathcal{D}} \max_{0\le m\le n} \left\{
\sum_{k=0}^{m-1} \delta'(k)\bar \gamma^k\right\}\,,
\end{align*}
where
$$
\Delta'_\gamma(m,n)=\sum_{k=m+1}^n \delta'(k)\bar\gamma^{n-k}\,.
$$
In the second line, we have set $m=0$, and in the third
line, we have used the i.i.d. assumption.
Taking $n\to\infty$, we obtain for the steady state
$$
\P(l_u>0) = \P(l_o'>0)
\le \P\left(\max_{m\ge 0}
\sum_{k=0}^{m} \delta'(k)\bar \gamma^k >C\right)\,.
$$
Note that the steady state does
not depend on the choice of $B(0)$
by Theorem~\ref{thm:convergence}.

Define for $m\ge 0$
\begin{align*}
z(m)&=e^{\theta^* \sum_{k=0}^m {\delta'(k) \bar \gamma^{k}}}\,.
\end{align*}
Then 
\begin{align*}
z(m) & =e^{\theta^* \delta'(m)\bar\gamma^m}z(m-1) \, , \\
z(0) & =e^{\theta^* \delta'(0)} \, . 
\end{align*}
Since $\delta'(m)$ is independent of $z(m-1)$, we have
\begin{align*}
\E[z(m)\mid z(m-1)]
&= \E \bigl[e^{\theta^* \delta'(m)\bar\gamma^m}\bigr] \, z(m-1)\\
& \le z(m-1)\,.
\end{align*}
Therefore, $z(m)$ is a supermartingale.
By Doob's inequality for positive supermartingales,
\begin{align}
\notag
\P(l_u>0)
&\le \P\left(\max_{m\ge 0} z(m) > e^{\theta^* C}\right)\\
\notag &\le \P\left(z(0) > e^{\theta^* C}\right)\\
&\le e^{-\theta^* C}\,.
\notag 
\end{align}
In the last step, we have used Markov's inequality
and the fact that $\E[z(0)]=1$.
This proves the claim in Eq.~\eqref{eq:regime1_underflow}.

The bound can
be sharpened by a stopping-time argument, as follows.
For $k\ge 0$, let $t_k$ be 
the minimum of $k$ and the number of the
first time slot where $z$ exceeds $e^{\theta^* C}$.
Clearly, each $t_k$ is a random variable of finite expectation,
and $t_{k}\le t_{k+1}$ for all $k$.
Since $\max_{m\ge 0} z(m)>e^{\theta^* C}$ if and
only if $\lim_{k\to\infty} t_k<\infty$,
it follows that
\begin{equation}
\P(l_u>0)\le \lim_{k\to\infty} \P(t_k<k)\,.
\label{eq:proof_k}
\end{equation}
By the optional stopping theorem and the positivity
of $z(m)$,
\begin{align*}
1&=\E[z(0)] \\
&\geq \E[z(t_k)]\\
& \geq \E[z(t_k)\mid t_k<k] \,\P(t_k<k)
\end{align*}
for each $k\ge 0$, that is,
\begin{equation}
\P(t_k<k)\le \frac{1}{\E[z(t_k)\mid t_k<k]}\,.
\label{eq:proof_cond_e}
\end{equation}
Now, using the method of~\cite{ross-martingale} we find that
 \begin{align*} 
\E\bigl[z(t_k)\mid t_k<k\bigr] &=\E\bigl[z(t_k)\mid z(t_k)>e^{\theta^*C}\bigr]\\
 &=e^{\theta^* C} 
\E\Bigl[e^{\theta^*\left( \sum_{j=0}^{t_k} {\delta'(j) \bar\gamma^{j}}-
C\right)}\Big\vert \sum_{j=0}^{t_k} 
\delta'(j) \bar\gamma^{j}> C\Bigr]\nonumber \\
 &\ge e^{\theta^* C} \inf_{x\in\RR} 
\E\bigl[e^{\theta^* (\delta'(0)-x)}
\, \bigl| \, \delta'(0)>x\bigr]\,, 
\end{align*}
where the last line follows by conditioning on
$t_k$ and on the value of $C-\sum_{j=1}^{t_k} \delta'(j)$.
If $a(n)$ and $s(n)$ are independent  
one can condition also on $a(0)$ to obtain 
\begin{equation}
\E\bigl[z(t_k)\, \bigl| \, t_k<k\bigr]
\ge e^{\theta^*C}
\inf_{x\in\RR}\E\bigl[e^{\theta^* (s-x)}
\, \bigl| \, s>x\bigr]\,.
\label{eq:proof_inf_e}
\end{equation}
The proof of Eq.~\eqref{eq:regime1_underflow-imp}
is concluded by inserting Eq.~\eqref{eq:proof_inf_e}
into Eq.~\eqref{eq:proof_cond_e} and then
using Eq.~\eqref{eq:proof_k}.
\end{proof}

The theorem does not always provide accurate estimates in the 
capacity-dominated regime. The bounds on the loss probability 
are acceptable when $\gamma$ is close to 
zero and the randomness of the system is limited. 

 \begin{figure}[!t]
\centering
\includegraphics[width=5in]{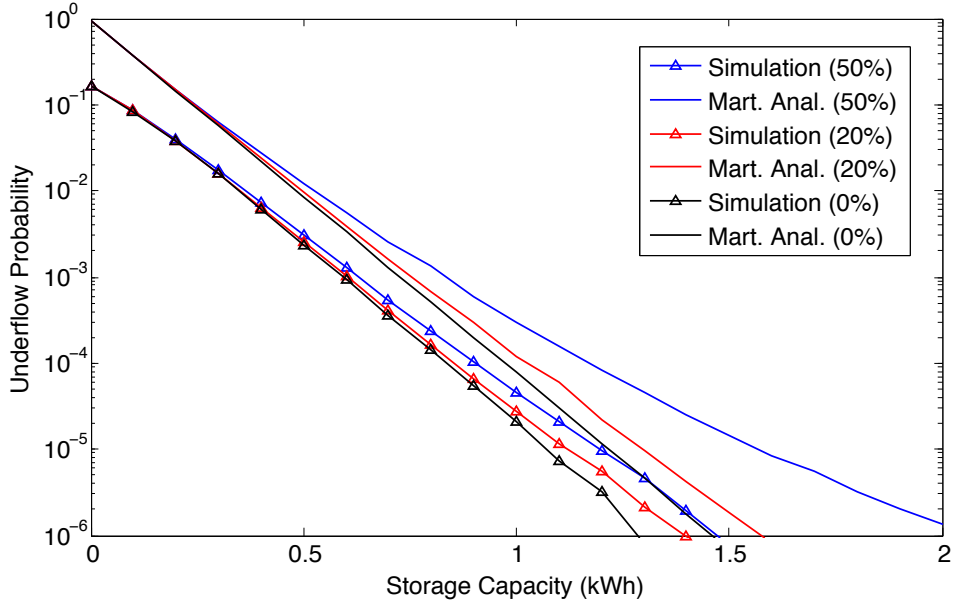}
  \caption{Underflow in the capacity-dominated regime. 
{\rm ($\E[\delta]=200$~Wh, $\sigma_{\delta}=50\sqrt{2}$~Wh 
and self-discharge per day of $0\%$, $20\%$, or $50\%$)}.}
\label{fig:p_underflow_gauss_buffdom}
\end{figure}

In Fig.~\ref{fig:p_underflow_gauss_buffdom}, we evaluate 
Theorem~\ref{thm:martingale} for a  leakage queue with Gaussian 
supply and demand processes as in Subsec.~\ref{subsec:bounds-gaussian}. 
We consider a storage capacity of $\le 2$~kWh, 
which is clearly in the capacity-dominated 
regime. For the parameters used in Subsec.~\ref{subsec:bounds-gaussian}, 
the martingale bound does not result in good estimates. 
The results improve when we reduce the randomness of the stored energy process.  
In Fig.~\ref{fig:p_underflow_gauss_buffdom} we have done this 
by reducing  the standard deviations to 
$\sigma_a=\sigma_s = 50$~Wh, resulting in 
$\sigma_\delta =50\sqrt{2} $~Wh.
In this case, the analytical results provide useful upper bounds on the 
underflow probabilities. Note that a storage system in the capacity-dominated 
regime with a high degree of randomness will result in high overflow probabilities as well as high underflow probabilities. While it is not apparent that such a parameter region is of interest for deployed energy storage systems, analytical methods  
that provide good bounds in this region remain an open problem.

\section{Evaluation of a Wind Energy Model}
\label{sec:numerics}

\begin{figure*}[ht]
  \centering
	\subfigure[Energy Loss (Underflow).]{
 	\includegraphics[width=5in]{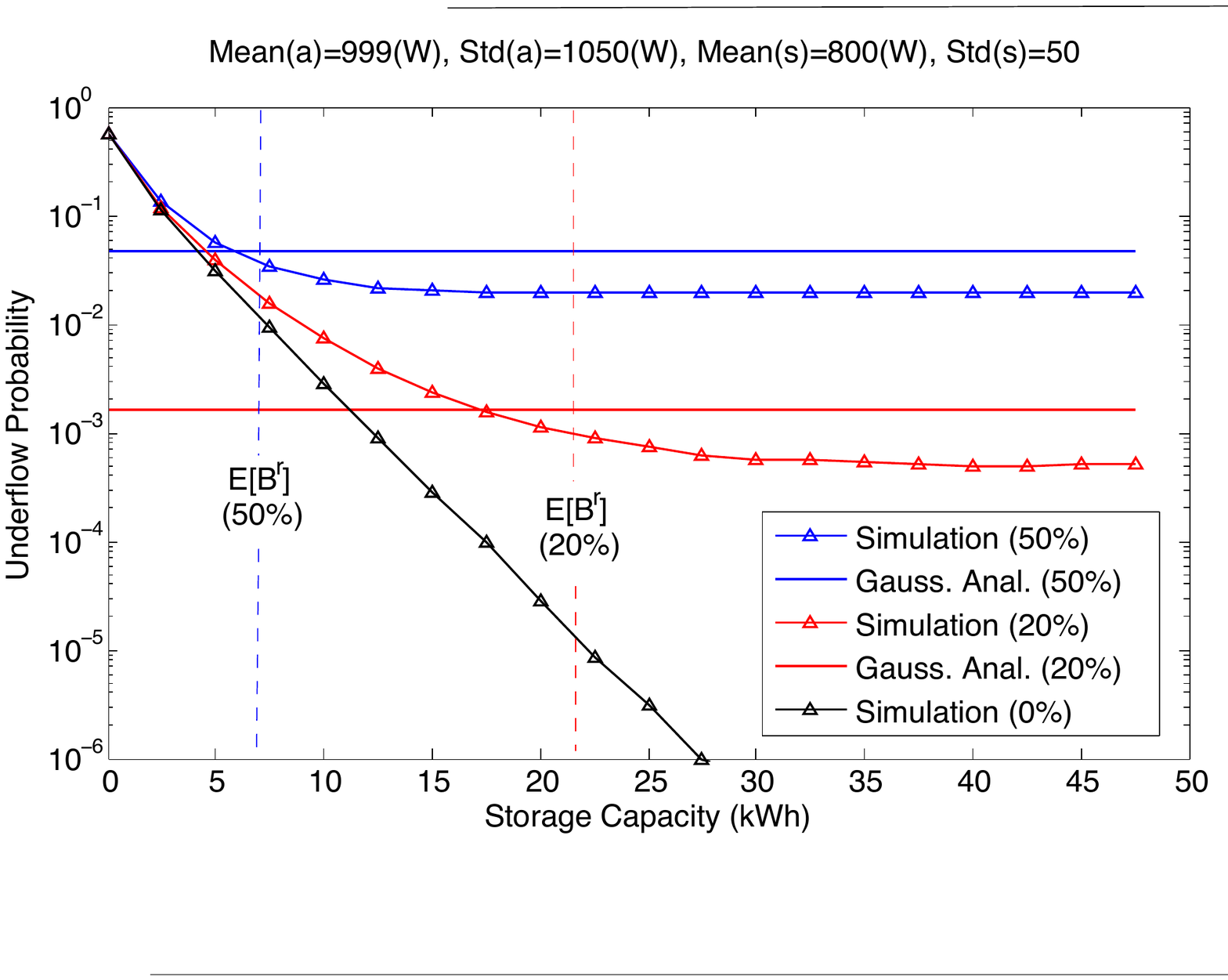}
	\label{fig:P_underflow_wind_leakdom}
	}
	\subfigure[Energy Waste (Overflow).]{
 	\includegraphics[width=5in]{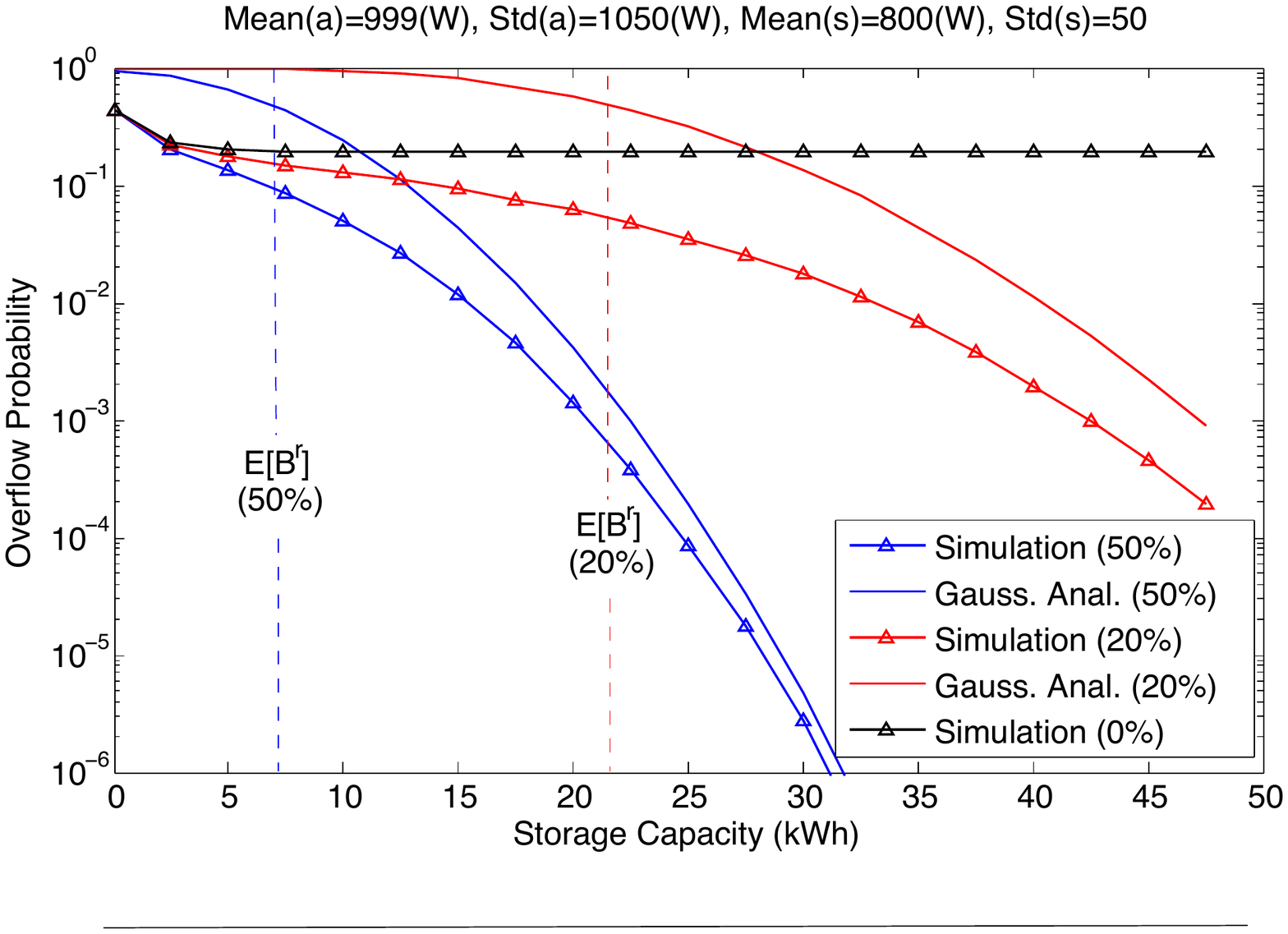}
	\label{fig:P_overflow_wind_leakdom}
	}   
  \caption{Underflow and overflow probabilities with wind energy source. 
{\rm ($\E[a]=1$~kWh, $\E[s]=800$~Wh, $\sigma_a=1050$~Wh, 
$\sigma_s=50$~Wh.  The self-discharge per day is 0\%, 20\%, or 50\%)}.}
 \label{fig:p_wind_leakdom}
\end{figure*}
We next consider a leakage queue 
with a supply process that resembles a wind energy source. 
Our objectives are twofold. First, we want to see if the reference system 
remains useful in the context of more realistic, and more complex, random processes. 
Second, we want to evaluate the accuracy of our analysis.  
We use the wind speed process from \cite{mod_est}, 
which models wind speed as an i.i.d. process with  a Weibull distribution with 
density function
\begin{equation*}
f_V(v)=\frac{k}{c} \left(\frac{v}{c}\right)^{k-1} 
e^{-{\left(\frac{v}{c}\right)}^k} \, ,
\end{equation*}
where $c$ and $k$, respectively, are the scale and shape parameters of the 
Weibull distribution.
As in~\cite{mod_est}, we set the shape parameter to $k=3$ in order to bound 
the degree of randomness of the wind speed. 
The scale parameter factor is set to $c=7$~m/s,  
which results in an average wind speed of $6.25$~m/s. 
 
We consider a wind turbine with a rated power of $P_r = 1$~kW, 
which is comparable to a micro wind turbine for a residential 
home~\cite{microturbine}. 
The  output power of wind turbines, denoted by  $P_w$ and expressed in 
$kW/\text{m}^2$ is a function of the wind speed $v$. 
Wind turbines are activated only when 
the wind speed is above a lower threshold (cut-in speed) 
and below an upper threshold (cut-out speed).  
The rated speed is the wind speed at which the 
wind turbine generates its rated power~$P_r$. 
Using the power model from~\cite{dec_sup}, we obtain 
\begin{align}
P_w=
\begin{cases}
0 &  v<v_{ci} \, , \\
\alpha v^3-\beta P_r & v_{ci}\leq v \leq v_r \, , \\
P_r &  v_r\leq v\leq v_{co} \, , \\
0 &  v_{co}\leq v \, .  
\end{cases} \label{eq:power_curve}
\end{align}
where $v_{ci}$, $v_r$ and $v_{co}$, respectively, are the 
cut-in, rated, and cut-out wind speeds,  and $\alpha$ 
and $\beta$ are calculated such that 
Eq.~\eqref{eq:power_curve} is continuous at $v_{ci}$ and 
${v_r}$, i.e., $\alpha=\tfrac{P_r}{v_r^3-v_{ci}^3}$ and 
$\beta=\tfrac{v_{ci}^3}{v_r^3-v_{ci}^3}$. 
The actual power 
from the wind turbine is given by~\cite{dec_sup}
\[
a(n)=P_w\cdot A_w \cdot \eta_w,
\] 
where $A_w$ and $\eta_w$ are the total swept area and the 
efficiency of the wind turbine, respectively.
The parameters of the wind turbine are summarized in Table~\ref{ta:sim_table}. 
With these parameters we obtain a supply process with 
$\E[a] = 1$~kWh and $\sigma_a = 1050$~Wh. 

\begin{table}[h]
\centering
\caption{Parameters of the wind turbine.}
\begin{tabular}[t]{llr}
\hline
$\text{Notation}$ & Definition & value\\ 
\hline
$P_r$ & Rated power & 1~kW\\
$v_{ci}$ & Cut-in wind speed & 3~m/s\\
$v_{co}$ & Cut-out wind speed & 25~m/s\\ 
$v_r$ & Rated wind speed & 12~m/s\\ 
$A_w$ & Total swept area & 10.8~$\text{m}^2$\\
$\eta_w$ & Wind turbine efficiency & 50\%\\
\end{tabular}\label{ta:sim_table}
\end{table}

The demand process is set as the sum of a constant demand of 
$750$~Wh and an i.i.d. exponential random value with average $50$~Wh, 
resulting 
in $\E[s] = 800$~Wh and $\sigma_s = 50$~Wh.
We consider energy storage systems with a significant self-discharge, 
with leakage ratios $\gamma = 0.0093$ (20\% per day) and $\gamma = 0.0285$ 
(50\% per day), which is within the range of supercapacitors or flywheels. 
As before, we use~$1$~hour for the length of a time slot.

\newpage 
 
\subsection{Impact of self-discharge}
\label{subsec:wind}

In Fig.~\ref{fig:p_wind_leakdom}, we show the underflow and overflow probability 
as a function of 
the size of the energy storage system for different leakage ratios.  
We depict the results of the Gaussian analysis and the simulations. 
Note that the average net charge $E[\delta]$ as well as 
the leakage ratios match the examples in Sec.~\ref{sec:bounds}.  
Thus, by comparing Fig.~\ref{fig:p_wind_leakdom} with 
Fig.~\ref{fig:gauss_leakdom} we can gauge the impact of 
replacing  the Gaussian  energy supply and demand  with 
distributions that are quite different from a normal distribution. 
The results of the analysis and the simulations corroborate our earlier observations 
from Subsec.~\ref{subsec:bounds-gaussian}. 
The underflow and overflow probabilities of a 
leakage queue are significantly  different from 
those of a conventional finite-capacity queueing system ($\gamma = 0$). 
Different from a conventional queue, the underflow probability of a leakage 
queue does not vanish when the storage capacity is increased. 
The reverse is true for the overflow probability. 
As seen in Subsec.~\ref{subsec:bounds-gaussian}, the conventional queueing system is generally at capacity with (an 
eventually) constant overflow probability. 
As for the accuracy of the Gaussian analysis, 
the estimates of the underflow and overflow probabilities are good 
in the leakage-dominated regime ($C > E[B^r]$). 
We conclude that the accuracy of the Gaussian analysis does not 
deteriorate when moving to non-Gaussian distributions for supply and demand. 

\clearpage
\begin{figure}[!t]
\centering
\includegraphics[width=5in]{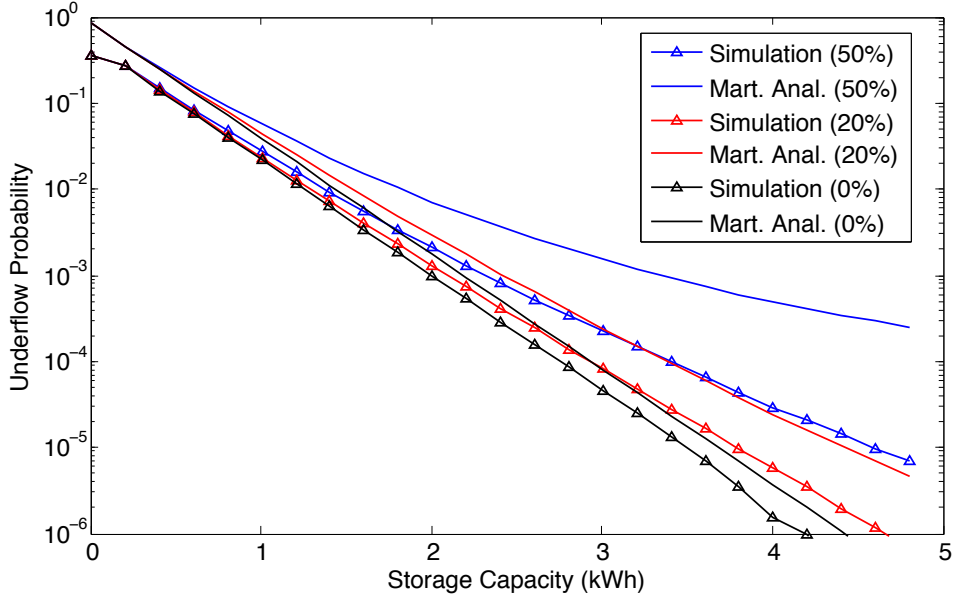}
  \caption{Underflow probability with wind energy in the capacity-dominated regime. 
{\rm ($\E[a]=1$~kWh, $\E[s]=400$~Wh, $\sigma_a=1050$~Wh, $\sigma_s=50$~Wh. 
The self-discharge per day is 0\%, 20\%, or 50\%)}.}
\label{fig:P_underflow_wind_buffdom}
\end{figure}

In Fig.~\ref{fig:P_underflow_wind_buffdom},  we present the underflow probabilities 
computed with the martingale analysis for a capacity-dominated regime where $C \le 5$~kWh. 
Recall that, in this range,  the overflow probability is always close to one. 
Since the martingale method does not provide usable bounds 
unless the underflow and the overflow probabilities are small, 
we reduce the randomness of the stored energy process $B (n)$ 
by reducing the demand process to $\E[s]=400$~Wh (and keeping $\sigma_s = 50$~Wh.)
Note that the expected values are selected to match those used in 
Fig.~\ref{fig:gauss_leakdom}. Comparing 
Fig.~\ref{fig:P_underflow_wind_buffdom} and Fig.~\ref{fig:gauss_leakdom}, we see 
that the accuracy of the martingale analysis is comparable or even improved.

\subsection{Validation of the reference system}
We return to the main finding of this paper, which is the 
distinct behavior of the leakage queue in the leakage-dominated 
regime. 
In Subsec.~\ref{subsec:ref_system}, we used empirical 
data to show that the reference system provides an accurate 
characterization of the leakage queue in the leakage-dominated regime. 
Further, in Subsec.~\ref{subsec:bounds-gaussian}, we proved that the 
reference system approaches a normal distribution when~$\gamma$ is small. 
Our final evaluation validates that the leakage queue 
in the leakage-dominated regime is well described by a normal 
distribution even if the supply and demand processes are not Gaussian. 

We work with the same supply and demand processes described at the 
beginning of this section. We consider energy storage systems with 
$\gamma=0.0093$ (20\% self-discharge per day), and, therefore,  
$\E[B^r] = 21.5$~kWh.
Fig.~\ref{fig:CDFplot_wind_fixedGamma20perc} presents the 
CDF for the stored energy 
obtained from simulations of the leakage queue with 
capacities set to $C=10, 20, 40$~kWh. 
The  queue is leakage-dominated if 
$C=40$~kWh, and capacity-dominated otherwise. 
We compare the distribution of the stored energy of the Gaussian 
approximation of the reference system 
(using Eq.~\eqref{eq:gauss_approx-lu})
with those obtained 
for a simulated leakage queue. 
We observe that for $C=40$~kWh, the simulation closely follows 
the normal distribution of the reference system. 
In the capacity-dominated regime ($C=10, 20$~kWh),  
the distributions of the stored energy do not resemble at all  a normal 
distribution. 

\medskip
The characteristics  of the distributions of the stored energy become more evident 
in a quantile-quantile (Q-Q) plot, where we compare the quantiles of 
the normal distribution with those of the simulations in 5\% increments. 
In Fig.~\ref{fig:Q_Qplot_wind_fixedGamma20perc},  we provide the quantiles of 
the normal distribution on the horizontal axis. The diagonal line, shown as a 
thin  solid line, therefore depicts the normal distribution. 
The thick solid line (in gray) has the results for the skew normal 
distribution, using the skewness for $B^r$ from~Eq.~\eqref{eq:skew-normal}. 
The values obtained from the simulations are shown as colored data points. 
The closer the data points are to the diagonal, the better the match is with the 
normal distribution. 
We observe that the stored energy in a leakage-dominated regime ($C=40$~kWh) 
is very close to the diagonal. The match is further improved with the skew 
normal distribution. 
The capacity-dominated regime ($C=10, 20$~kWh) is obviously 
poorly matched with a normal distribution. Even, for  $C=20$~kWh, when the storage 
capacity is close to $\E[B^r]$, the Q-Q plot is far from the diagonal. 


\begin{figure}[!t]
  \centering
	\subfigure[CDF.]{
 	\includegraphics[width=5in]{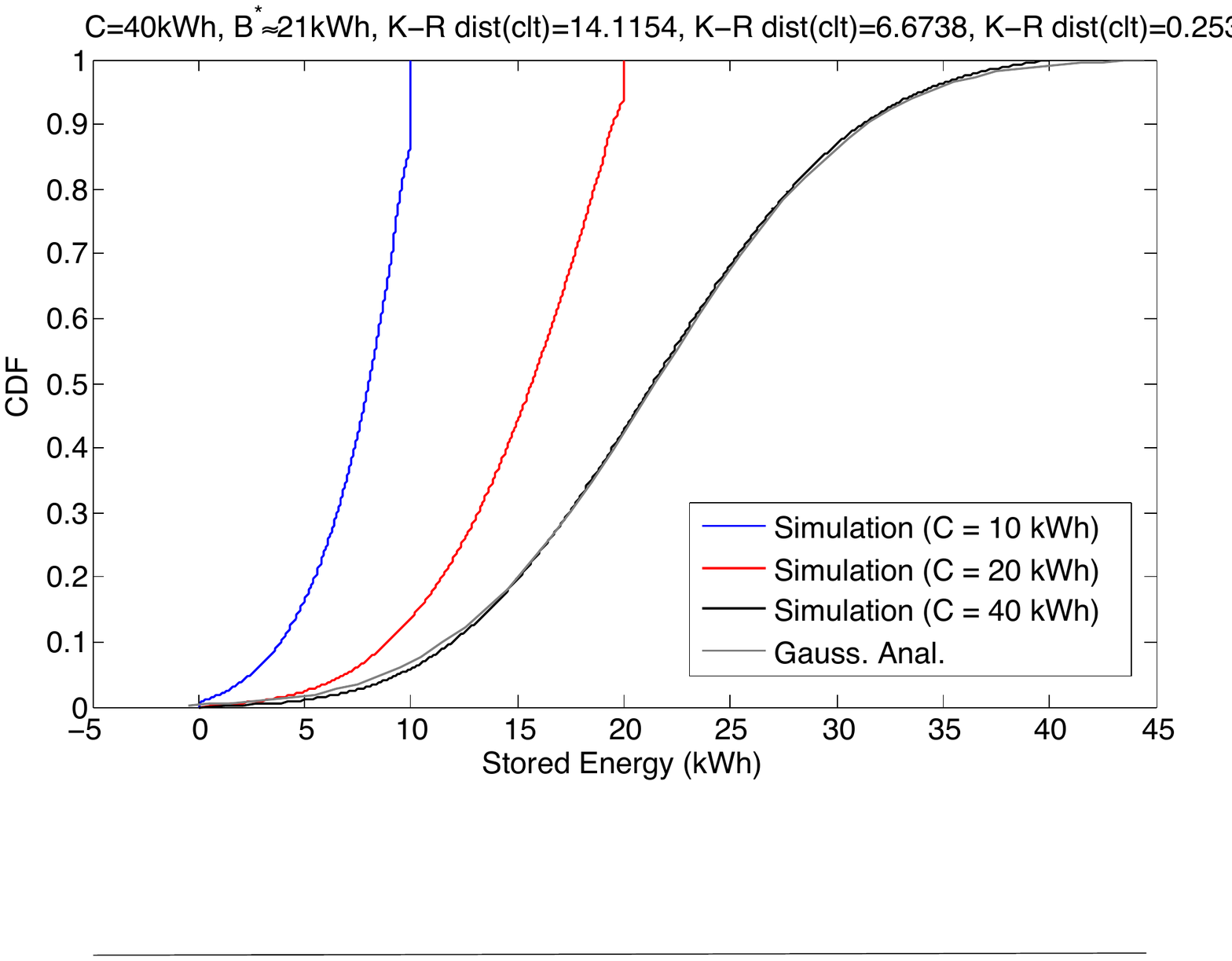}
	\label{fig:CDFplot_wind_fixedGamma20perc}
	}

\vspace{6pt}
\centering
	\subfigure[Q-Q plot.]{
 	\includegraphics[width=5in]{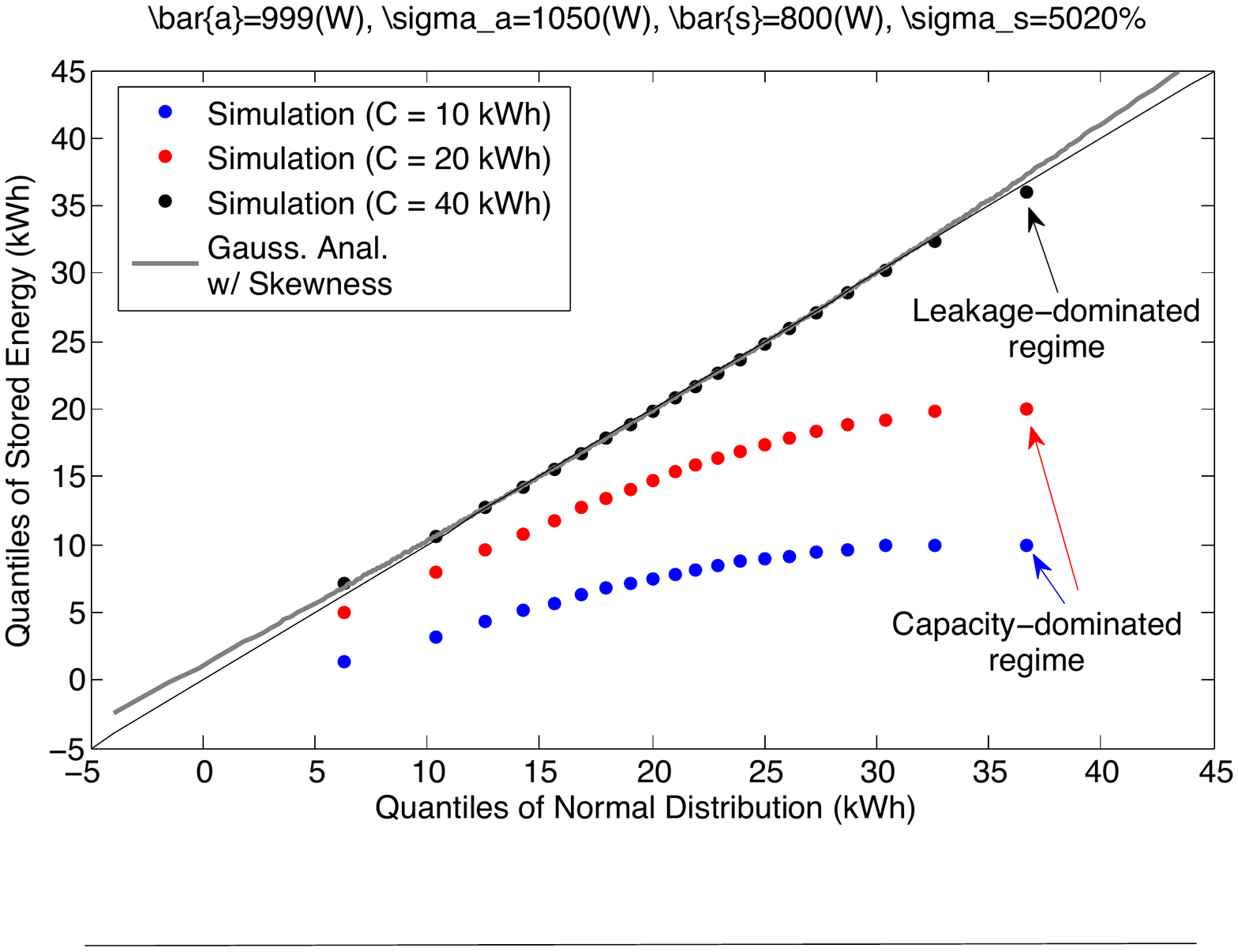}
	\label{fig:Q_Qplot_wind_fixedGamma20perc}
	}
\caption{Comparison of distributions of the stored energy in a leakage queue. 
{\rm ($\gamma$=0.0093,  $C=10, 20, 40$~kWh, 
$\E[a]=1$~kWh, $\E[s]=800$~Wh, $\sigma_a=$1050~Wh, $\sigma_s=$50~Wh).} }

\end{figure}

\clearpage
\section{Conclusions}
\label{sec:concl}

We presented an analysis of a queueing model for an energy 
storage system with self-discharge. The model, referred to as 
{\it leakage queue}, has, in addition to a supply and a demand process, 
a self-discharge process that  removes  storage content proportionally to the 
filling level. 
We identified two distinct parameter regimes for the leakage queue,
which we called   leakage-dominated regime and capacity-dominated regime. 
In the  leakage-dominated regime, the queue settles in a steady state  below the storage capacity. 
In the capacity-dominated regime, the leakage queue resembles  
a conventional finite-capacity queueing system. 
We presented analytical methods for computing probabilities of 
underflow and overflow, and evaluated their accuracy.  
Extensions of our work should address a relaxation of the discrete-time 
assumption to a continuous-time system as well as a relaxation of the i.i.d. assumptions to 
general stationary processes. 
The empirical solar irradiance data used in Subsec.~\ref{subsec:ref_system} suggests 
that the regimes of the leakage queue manifest themselves even for non-stationary 
supply processes. It is also desirable to 
improve the martingale analysis for systems where the randomness of the 
system is high, and both underflow and overflow events occur frequently. 
Other directions for future research are an experimental confirmation 
of our findings on leakage queues in an actual energy storage system 
with self-discharge. An interesting  question is how to take advantage of the observed effects of self-discharge  
to improve the design and control of energy storage systems. 

\section*{Acknowledgements} 
This work is supported in part by the Natural Sciences and Engineering Research Council of Canada (NSERC).


%


\begin{thebibliography}{10}

\bibitem{dualkeshav}
O.~Ardakanian, S.~Keshav, and C.~Rosenberg.
\newblock On the use of teletraffic theory in power distribution systems.
\newblock In {\em Proc. 3rd International Conference on Energy-Efficient
  Computing and Networking (ACM e-Energy)}, pages 1--10, may 2012.

\bibitem{skew-normal}
A.~Azzalini.
\newblock {\em The Skew-Normal and Related Families}.
\newblock Cambridge University Press, 2013.

\bibitem{microturbine}
A.S. Bahaj, L.~Myers, and P.A.B. James.
\newblock Urban energy generation: Influence of micro-wind turbine output on
  electricity consumption in buildings.
\newblock {\em Energy and Buildings}, 39(2):154--165, 2007.

\bibitem{dualravi}
N.~Barjesteh.
\newblock Duality relations in finite queueing models.
\newblock Master's thesis, 2013.
\newblock University of Waterloo.

\bibitem{dec_sup}
R.~Chedid, H.~Akiki, and S.~Rahman.
\newblock A decision support technique for the design of hybrid solar-wind
  power systems.
\newblock {\em IEEE Transactions on Energy Conversion}, 13(1):76--83, 1998.

\bibitem{chen2006accurate}
M.~Chen and G.~A. Rincon-Mora.
\newblock Accurate electrical battery model capable of predicting runtime and
  iv performance.
\newblock {\em IEEE Transactions on Energy Conversion}, 21(2):504--511, 2006.

\bibitem{Low-2011-DemandResponse}
N.~Li~L. Chen and S.~H. Low.
\newblock Optimal demand response based on utility maximization in power
  networks.
\newblock In {\em Proc. IEEE Power and Energy Society General Meeting}, pages
  1--8. IEEE, 2011.

\bibitem{cohen}
J.~W. Cohen.
\newblock {\em The Single Server Queue}.
\newblock North-Holland, 1969.

\bibitem{cruz}
R.~L. Cruz and H.~N Liu.
\newblock Single server queues with loss: a formulation.
\newblock In {\em Proc. Conference on Information Sciences and Systems (CISS)}.
  John Hopkins University, March 1993.

\bibitem{dekka2015survey}
A.~Dekka, R.~Ghaffari, B.~Venkatesh, and B.~Wu.
\newblock A survey on energy storage technologies in power systems.
\newblock In {\em Proc. IEEE Electrical Power and Energy Conference (EPEC)},
  pages 105--111, Oct 2015.

\bibitem{bat_overview}
K.C. Divya and J.~{\O}stergaard.
\newblock Battery energy storage technology for power systems - an overview.
\newblock {\em Electric Power Systems Research}, 79(4):511 -- 520, 2009.

\bibitem{Durrett}
R.~Durrett.
\newblock {\em Probability: Theory and Examples (4th {E}dition)}.
\newblock Cambridge University Press, 2010.

\bibitem{flywheel}
D.~Fooladivanda, G.~Mancini, S.~Garg, and C.~Rosenberg.
\newblock State of charge evolution equations for flywheels.
\newblock {\em CoRR}, abs/1411.1680, November 2014.

\bibitem{regulation}
D.~Fooladivanda, C.~Rosenberg, and S.~Garg.
\newblock Energy storage and regulation: an analysis.
\newblock {\em IEEE Transactions on Smart Grid}, 7(4):1813--1823, 2016.

\bibitem{dsmsurvey}
L.~Gelazanskas and K.~Gamage.
\newblock Demand side management in smart grid: A review and proposals for
  future direction.
\newblock {\em Sustainable Cities and Society}, 11:22--30, 2014.

\bibitem{ytrp}
Y.~Ghiassi-Farrokhfal, S.~Keshav, and C.~Rosenberg.
\newblock Toward a realistic performance analysis of storage systems in smart
  grids.
\newblock {\em IEEE Transactions on Smart Grid}, (1):402--410, 2015.

\bibitem{yjod}
Y.~Ghiassi-Farrokhfal, C.~Rosenberg, S.~Keshav, and M.~B. Adjaho.
\newblock Joint optimal design and operation of hybrid energy storage systems.
\newblock {\em IEEE Journal on Selected Areas in Communications},
  34(3):639--650, March 2016.

\bibitem{survey3}
H.~Ibrahim, A.~Ilinca, and J.~Perron.
\newblock Energy storage systems--characteristics and comparisons.
\newblock {\em Renewable and Sustainable Energy Reviews}, 12(5):1221--1250,
  2008.

\bibitem{reneg-balk2}
C.~J.~Ancker Jr. and A.~V. Gafarian.
\newblock Some queuing problems with balking and reneging, i.
\newblock {\em Operations Research}, 11(1):88--100, January/February 1963.

\bibitem{li_ion}
F.~Kazhamiaka, C.~Rosenberg, S.~Keshav, and K.-H. Pettinger.
\newblock Li-ion storage models for energy system optimization: the
  accuracy-tractability tradeoff.
\newblock In {\em Proc. 7th International Conference on Future Energy Systems
  (ACM e-Energy)}, pages 17:1--17:12, June 2016.

\bibitem{kelly91}
F.~P. Kelly.
\newblock Effective bandwidths at multi-class queues.
\newblock {\em Queueing Systems}, 9(1-2):5--15, 1991.

\bibitem{Survey5}
A.~Khaligh and Z.~Li.
\newblock Battery, ultracapacitor, fuel cell, and hybrid energy storage systems
  for electric, hybrid electric, fuel cell, and plug-in hybrid electric
  vehicles: state of the art.
\newblock {\em IEEE Transactions on Vehicular Technology}, 59(6):2806--2814,
  2010.

\bibitem{kingman}
J.~F.~C. Kingman.
\newblock A martingale inequality in the theory of queues.
\newblock {\em Mathematical Proceedings of the Cambridge Philosophical
  Society}, 60(2):359--361, 1964.

\bibitem{Book-Kleinrock}
L.~Kleinrock.
\newblock {\em Queueing Systems. Volume 1: Theory}.
\newblock John Wiley \& Sons, 1975.

\bibitem{kobayashi}
H.~Kobayashi and A.~Konheim.
\newblock Queueing models for computer communications system analysis.
\newblock {\em IEEE Transactions on Communications}, 25(1):2--29, 1977.

\bibitem{tassiulas-grid-11}
I.~Koutsopoulos, V.~Hatzi, and L.~Tassiulas.
\newblock Optimal energy storage control policies for the smart power grid.
\newblock In {\em Proc. IEEE SmartGridComm}, pages 475--480, Oct 2011.

\bibitem{Book-LeBoudec}
J.~Y. {Le Boudec} and P.~Thiran.
\newblock {\em Network Calculus}.
\newblock Springer Verlag, LNCS 2050, 2001.

\bibitem{leboudec}
J.-Y. Le~Boudec and D.-C. Tomozei.
\newblock A demand-response calculus with perfect batteries.
\newblock In {\em Proc. 16th International GI/ITG Conference (MMB {\&} DFT),
  Workshop on Network Calculus (WoNeCa)}, pages 273--287. Springer Berlin
  Heidelberg, March 2012.

\bibitem{wierman-dr}
Z.~Liu, I.~Liu, S.~Low, and A.~Wierman.
\newblock Pricing data center demand response.
\newblock pages 111--123, June 2014.

\bibitem{game-dr-2010}
A.-H. Mohsenian-Rad, V.~W.~S. Wong, J.~Jatskevich, R.~Schober, and
  A.~Leon-Garcia.
\newblock Autonomous demand-side management based on game-theoretic energy
  consumption scheduling for the future smart grid.
\newblock {\em IEEE Transactions on Smart Grid}, 1(3):320--331, 2010.

\bibitem{sam-data}
{National Renewable Energy Laboratory}.
\newblock National solar radiation data base, 1961- 1990: {T}ypical
  meteorological year 2, 1992.

\bibitem{sam}
{National Renewable Energy Laboratory}.
\newblock System advisor model version 2016.3.14 ({SAM} 2016.3.14), 2016.

\bibitem{coordinating-2010}
M.~A. Pedrasa, T.~D. Spooner, and I.~F. MacGill.
\newblock Coordinated scheduling of residential distributed energy resources to
  optimize smart home energy services.
\newblock {\em IEEE Transactions on Smart Grid}, 1(2):134--143, 2010.

\bibitem{damtheory}
R.~M. Phatarfod.
\newblock Application of methods in sequential analysis to dam theory.
\newblock {\em The Annals of Mathematical Statistics}, 34(4):1588--1592, 1963.

\bibitem{Ciucu-forkjoin}
A.~Rizk, F.~Poloczek, and F.~Ciucu.
\newblock Computable bounds in fork-join queueing systems.
\newblock In {\em Proc. {ACM} {SIGMETRICS} `15}, pages 335--346, June 2015.

\bibitem{ross-martingale}
S.~M. Ross.
\newblock Bounds on the delay distribution in {GI/G/1} queues.
\newblock {\em Journal of Applied Probability}, 11(2):417--421,  1974.

\bibitem{leadacic1992}
Z.~M. Salameh, M.~A. Casacca, and W.~A. Lynch.
\newblock A mathematical model for lead-acid batteries.
\newblock {\em IEEE Transactions on Energy Conversion}, 7(1):93--98, 1992.

\bibitem{samadi2010optimal}
P.~Samadi, A.-H. Mohsenian-Rad, R.~Schober, V.~W.~S. Wong, and J.~Jatskevich.
\newblock Optimal real-time pricing algorithm based on utility maximization for
  smart grid.
\newblock In {\em Proc. First IEEE International Conference on Smart Grid
  Communications (SmartGridComm)}, pages 415--420, 2010.

\bibitem{mod_est}
J.V. Seguro and T.W. Lambert.
\newblock Modern estimation of the parameters of the Weibull wind speed
  distribution for wind energy analysis.
\newblock {\em Journal of Wind Engineering and Industrial Aerodynamics},
  85(1):75 -- 84, 2000.

\bibitem{genset}
S.~Singla, Y.~Ghiassi-Farrokhfal, and S.~Keshav.
\newblock Using storage to minimize carbon footprint of diesel generators for
  unreliable grids.
\newblock {\em IEEE Transactions on Sustainable Energy}, 5(4):1270--1277, 2014.

\bibitem{elgamal}
H.~I. Su and A.~E. Gamal.
\newblock Modeling and analysis of the role of energy storage for renewable
  integration: power balancing.
\newblock {\em IEEE Transactions on Power Systems}, 28(4):4109--4117, 2013.

\bibitem{sun1}
S.~Sun, M.~Dong, and B.~Liang.
\newblock Real-time power balancing in electric grids with distributed storage.
\newblock {\em IEEE Journal of Selected Topics in Signal Processing},
  8(6):1167--1181, 2014.

\bibitem{sun2}
S.~Sun, B.~Liang, M.~Dong, and J.~A. Taylor.
\newblock Phase balancing using energy storage in power grids under
  uncertainty.
\newblock {\em IEEE Transactions on Power Systems}, 31(5):3891--3903, 2016.

\bibitem{tesla}
{Tesla Inc.}
\newblock Tesla powerwall 2, 2017.

\bibitem{hassibi}
C.~Thrampoulidis, S.~Bose, and B.~Hassibi.
\newblock Optimal placement of distributed energy storage in power networks.
\newblock {\em IEEE Transactions on Automatic Control}, 61(2):416--429, 2016.

\bibitem{esd}
D.~Wang, C.~Ren, A.~Sivasubramaniam, B.~Urgaonkar, and H.~Fathy.
\newblock Energy storage in datacenters: what, where, and how much?
\newblock In {\em Proc. ACM SIGMETRICS `12}, pages 187--198, June 2012.

\bibitem{pnc}
K.~Wang, F.~Ciucu, C.~Lin, and S.~H. Low.
\newblock A stochastic power network calculus for integrating renewable energy
  sources into the power grid.
\newblock {\em IEEE Journal on Selected Areas in Communications},
  30(6):1037--1048, 2012.

\bibitem{household}
{World Energy Council}.
\newblock Energy efficiency indicators, 2017.

\bibitem{jiang}
K.~Wu, Y.~Jiang, and D.~Marinakis.
\newblock A stochastic calculus for network systems with renewable energy
  sources.
\newblock In {\em Proc. IEEE INFOCOM Workshops}, pages 109--114, March 2012.

\bibitem{joint-yang}
P.~Yang and A.~Nehorai.
\newblock Joint optimization of hybrid energy storage and generation capacity
  with renewable energy.
\newblock {\em IEEE Transactions on Smart Grid}, 5(4):1566--1574, 2014.

\end{thebibliography}

\vspace{-5pt}

\end{document}